\def\twon #1{\|#1\|}
\def\rainfty{\rightarrow\infty}
\def\ra{\rightarrow}
\def\argmin{\text{argmin}}
\def\bE{\mathbb{E}}
\def\bP{\mathbb{P}}
\def\cQ{\mathbb{Q}}
\def\bR{\mathbb{R}}
\def\cE{\mathcal{E}}
\def\cF{\mathcal{F}}
\def\cG{\mathcal{G}}
\def\cH{\mathcal{H}}
\def\cL{\mathcal{L}}
\def\cN{\mathcal{N}}
\def\cP{\mathcal{P}}
\def\cQ {\mathcal{Q}}
\def\cU{\mathcal{U}}
\def\cV{\mathcal{V}}
\def \qed {\hfill \vrule height6pt width 6pt depth 0pt}
\def\bee{\begin{equation}}
\def\ene{\end{equation}}
\def\beq{\begin{eqnarray}}
\def\enq{\end{eqnarray}}
\newtheorem{assum}{Assumption}
\newtheorem{lem}{Lemma}
\newtheorem{rem}{Remark}
\newtheorem{cor}{Corollary}
\newtheorem{thm}{Theorem}
\newtheorem{prob}{Problem}
\newtheorem{exmp}{Example}
\newtheorem{defi}{Definition}
\begin{document}
\title{Distributed Algorithms for Robust Convex Optimization via the Scenario Approach}
\author{Keyou You,~\IEEEmembership{Member,~IEEE}, Roberto Tempo, \IEEEmembership{Fellow,~IEEE}, and Pei Xie  \thanks{This work was supported by the National Natural Science Foundation of China (61304038),  Tsinghua University Initiative Scientific Research Program.} \thanks{Keyou You and Pei Xie are with the Department of Automation and TNList, Tsinghua University, 100084, China (email: youky@tsinghua.edu.cn, xie-p13@mails.tsinghua.edu.cn).} \thanks{Roberto Tempo  is deceased and was with CNR-IEIIT, Politecnico di Torino, Torino, 10129, Italy.}}

\maketitle
\begin{abstract}
This paper proposes distributed algorithms to solve robust convex optimization (RCO) when the constraints are affected by nonlinear uncertainty. We adopt a scenario approach by randomly sampling  the uncertainty set. To facilitate the computational task,
instead of using a single centralized processor to obtain a ``global solution" of the
scenario problem (SP), we resort to {\it multiple interconnected processors} that are distributed among different nodes of a network to simultaneously solve the SP. Then, we propose a primal-dual sub-gradient algorithm and a random projection algorithm to distributedly solve the SP over undirected and directed graphs, respectively. Both algorithms are given in an explicit recursive form with simple iterations, which are especially suited for processors with limited computational capability.  We show that, if the underlying graph is strongly connected, each node asymptotically computes a common optimal solution to the SP with a convergence rate $O(1/(\sum_{t=1}^k\zeta^t))$ where $\{\zeta^t\}$ is a sequence of appropriately decreasing stepsizes. That is, the RCO is effectively solved in a distributed way. The relations with the existing literature on robust convex programs are thoroughly discussed and an example of robust system identification is included to validate the effectiveness of our distributed algorithms.
\end{abstract}
\begin{IEEEkeywords} Robust convex optimization, uncertainty, scenario approach, primal-dual algorithm, random projection algorithm.\end{IEEEkeywords}

\section{INTRODUCTION}
A robust convex optimization (RCO) is a convex optimization problem where an infinite number of constraints are parameterized by uncertainties. This problem has found wide applications in control analysis and synthesis of complex systems, as well as in other areas of engineering \cite{calafiore2006scenario,tempo2013randomized}. As the dependence of the constraints on the uncertainties may be nonlinear, RCO is generally not easily solvable. In fact, the study of RCO bears a vast body of literature, see e.g. \cite{ben1998robust,scherer2005relaxations,bertsimas2011theory} and references therein.

In this paper, we adopt  a {\em scenario approach}, which was
first introduced in \cite{calcammp, calafiore2006scenario} to solve RCO. In particular, we randomly sample the uncertainty set and obtain a standard convex optimization called the scenario problem (SP). The guarantees of optimality are then given in a probabilistic sense and an explicit bound on the probability that the original constraints are violated is provided. The striking feature of this approach is that the sample complexity, which guarantees that a solution to the SP is optimal with a given level of confidence, can be computed a priori. We also refer to \cite{tempo2013randomized,calafiore2011research} for general properties and specific randomized algorithms to cope with uncertainty in systems and control.


To facilitate the computational task, instead of using a single processor to solve the SP, this paper proposes a distributed computing framework with many interconnected processors.
The challenging problem is to distribute the computational task among the nodes of a network, each representing a single processor. The idea is to break a (possibly) large number of constraints of the SP into many small sets of {\em local} constraints that can be easily handled in each node. That is, each node computes some optimal solution of the SP with a low computational cost. Under local interactions between nodes, the SP is then collaboratively solved in every node via three key steps.

First, every node randomly samples the uncertainty set of RCO, with the sample size inversely proportional to the total number of nodes or being a priori determined by its computational capability. Although this idea has been adopted in \cite{carlone2014distributed, notarstefano2011distributed} to solve the SP, our approach is substantially different. In particular, after sampling,  each node in \cite{carlone2014distributed} requires to completely solve a local SP at each iteration and exchange the set of active constraints with its neighbors. The process continues until a consensus on the set of active constraints  is reached. Finally, every node solves its local SP under {\em all} active constraints of the SP.  Clearly, the number of constraints in every local SP increases with the number of iterations.  In some extreme cases, each constraint in the SP can be active, and every node eventually solves a local SP that has the same number of constraints as the SP. Thus,  the computational cost in each node is not necessarily reduced.
Moreover, each node cannot guarantee to obtain the same optimal solution to the SP. Since an active constraint may become inactive in any future iteration, identifying the active constraints cannot be recursively implemented, and this computation is very sensitive to numerical errors. On the contrary, each node in this paper only needs to handle a fixed number of local constraints and recursively run an explicit algorithm with very simple structure.

Second, the SP is reformulated as a distributed optimization problem with many decoupled small sets of local constraints and a coupled constraint, which is specially designed in conformity with the network structure. 
If the number of nodes is large, each node only needs to deal with a very small number of local constraints. The information is then distributed across the network via the coupled constraint, so that it can be locally handled. We recall that a similar technique has been already adopted to solve distributed optimization problems, see e.g. \cite{boyd2011distributed,nedich2015convergence}, which are only focused on convex optimization problems and no robustness issues are addressed. On the other hand, robust optimization has also attracted significant attention in many research areas \cite{ben2009robust,gorissen2015practical}, but the proposed approaches are fully centralized. In this paper, we address both distributed and robust optimization problems simultaneously.

Third, each node of the network keeps updating a local copy of an optimal solution by individually handling its local constraints and interacting with its neighbors to address the coupled constraint. If the graph is strongly connected, every pair of nodes can indirectly access information from each other. To this purpose,  we develop two recursive distributed algorithms for each node to interact with the neighbors to solve the SP by utilizing the constraint functions under undirected and directed graphs, respectively. For both algorithms, the computational cost per iteration only involves a few additions and multiplications of vectors, in addition to the computation of the sub-gradients of parameterized constraint functions. Thus, the computational cost is small in each node, and the approach is particularly useful for solving a large-size optimization problem with many  solvers of reduced power.

For undirected graphs, where the information flow between the nodes is bidirectional, we solve the distributed optimization problem by using an augmented Lagrangian function with a quadratic penalty \cite{bertsekas1999nonlinear}. Following this approach, a distributed primal-dual sub-gradient algorithm is designed to find a saddle point.  In this case, both the decoupled and coupled constraints are handled by introducing Lagrange multipliers, which provide a natural approach from the optimization viewpoint. For the coupled constraint, each node also needs to broadcast its estimate of an optimal solution to the SP, and the modified Lagrange multipliers to the neighbors, after which it recursively updates them by jointly using sub-gradients of local constraint functions. We show that each node finally converges to some common optimal solution to the SP. We remark that most of the existing work on distributed optimization \cite{chatzipanagiotis2015augmented,lee2016asynchronous,nedich2015convergence} uses the Euclidean projection to handle  local constraints. The projection is easy to perform only if the projection set has a special structure, which is generally not the case in the SP. From this perspective, our algorithm is more attractive to solve the SP problem in the context of distributed algorithms.

For directed graphs, the information flow between nodes is unidirectional and the primal-dual algorithm for undirected graphs cannot be used. To overcome this issue, we address the coupled constraint by adopting a consensus algorithm and design a novel two-stage recursive algorithm.  At the first stage, we solve an unconstrained optimization problem which removes the decoupled local constraints in the reformulated distributed optimization and obtain an intermediate state vector in each node. We notice that, in  the classical literature \cite{nedic2009distributed,duchi2012dual,gharesifard2014distributed,lee2016asynchronous}, the assumption on balanced graphs is often made. In our paper, this restrictive assumption is removed and this step is non-trivial, see e.g. \cite{nedic2015distributed,xi2015directed}.
At the second stage, each node individually addresses its decoupled local constraints by adopting a generalization of Polyak random algorithm \cite{nedic2011random}, which moves its intermediate state vector toward a randomly selected local constraint set. Combining these two stages, and under some mild conditions, both consensus and feasibility of the iteration in each node are achieved almost surely.  Although this distributed algorithm is completely different from the primal-dual sub-gradient algorithm previously described, both algorithms essentially converge at a speed $O(1/(\sum_{t=1}^k\zeta^t))$ where $\{\zeta^t\}$ is a sequence of appropriately decreasing stepsizes.

The rest of this paper is organized as follows. In Section \ref{sec_rcp}, we formulate RCO and include four motivating examples, after which the probabilistic approach to RCO is introduced. In Section \ref{sec_nps}, we describe a  distributed computing framework for the SP. In Section \ref{sec_npsa}, a distributed algorithm is proposed via the primal-dual sub-gradient method for undirected graphs and show its convergence. In Section \ref{sec_nrpa},  we design a distributed random projected algorithm over directed graphs to solve RCO. An example focused on robust system identification is included in Section \ref{sec_simulation}. Some brief concluding remarks are drawn in Section \ref{sec_conclusion}. 

A preliminary version of this work appeared in \cite{you2016parallel}, which only addresses undirected graphs with a substantially different approach. This paper provides significant extensions to directed graphs using randomized algorithms, establish their convergence properties, include the complete proofs and provide new simulation results for robust system identification.

{\bf Notation:} The sub-gradient  of a vector function $y =[y_1,\ldots,y_n]'\in\bR^n$ 
whose components are convex functions with respect to an input vector $x\in\bR^m$ is denoted by $\partial y=[\partial y_1, \ldots, \partial y_n]'\subseteq\bR^{n\times m}$. For two non-negative sequences $\{a^k\}$ and $\{b^k\}$, if there exists a positive constant $c$ such that $a^k\le c \cdot b^k$, we write $a^k=O(b^k)$. For two vectors $a=[a_1,\ldots,a_n]'$ and $b=[b_1,\ldots,b_n]'$, the notation $a\succeq b$ means that $a_i$ is greater than $b_i$ for any $i\in\{1,\ldots,n\}$. A similar notation is used for $\succ$, $\preceq$ and $\prec$. The symbol ${\bf 1}$ denotes the vector with all entries equal to one. Given a pair of real matrices of suitable dimensions, $\otimes$ indicates their Kronecker product. Finally, $f(\theta)_{+}=\max\{0,f(\theta)\}$ is the positive part of $f$,
$\rm Tr(\cdot)$  is the trace of a matrix and $\|\cdot\|$ denotes Euclidean norm.

\section{Robust Convex Optimization and Scenario Approach}
\label{sec_rcp}
\subsection{Robust Convex Optimization}
Consider a robust convex optimization (RCO) of the form
\beq
\min_{\theta\in\Theta}&&\hspace{-0.5cm}c'\theta~~\text{subject to}~f(\theta,q)\le 0, \forall q\in\cQ ,\label{rcp_opt}
\enq
where $\Theta\subseteq \bR^n$ is a convex and closed set with non-empty interior, and the scalar-valued function $f(\theta,q): \bR^n\times \cQ  \rightarrow \bR$ is convex in the decision vector $\theta$ for any $q\in\cQ  \subseteq \bR^l$. The uncertainty $q$ enters into the constraint function $f(\theta,q)$ without assuming any structure, except for the Borel measurability  \cite{ash2000pam} of $f(\theta,\cdot)$ for any fixed $\theta$. In particular, $f(\theta,\cdot)$ may be affected by parametric (possibly nonlinear) and nonparametric uncertainty.

Note that a linear objective function is not essential and the results of the paper still hold for any convex function by a simple relaxation. Specifically, consider a convex objective function $f_0(\theta)$ and introduce an auxiliary variable $t$. Then, the optimization in (\ref{rcp_opt}) is equivalent to
$$
\min_{\theta\in\Theta,t\in\bR}t~~\text{subject to}~f_0(\theta)- t\le 0~\text{and}~f(\theta,q)\le 0, \forall q\in\cQ.
$$

Obviously, the above objective function becomes linear in the augmented decision variable $(\theta,t)$ and is of the same form as (\ref{rcp_opt}). That is, there is no loss of generality to focus on a linear objective function.

\subsection{Motivating Examples}
The robust convex optimization in (\ref{rcp_opt}) is  crucial in many areas of research, see e.g. \cite{ben2009robust,bertsimas2011theory} and references therein for more comprehensive examples. Here we present some important applications for illustration.
\begin{exmp}[Robust MPC] Consider uncertain linear systems
\bee
x^{k+1}=A(q)x^k+B(q)u^k\label{system}
\ene
where $q\in\cQ$ represents the system uncertainty. The robust model predictive control (MPC) aims to solve the following optimization problem
\beq
&&\min_{u^k,\ldots,u^{k+h-1}}\max_{q\in\cQ} \sum_{j=k}^{k+h-1}g(x^j,u^j)+v(x^{k+h})\nonumber\\
&&\text{subject to}~u^j, \ldots, u^{k+h-1}\in\cU~\text{and}~(\ref{system}),\nonumber
\enq
where $g$ and $v$ are convex functions, and $\cU$ is convex and closed.  Let $\theta=(u^k,\ldots,u^{k+h-1})$,  it follows from (\ref{system}) that the objective function can be rewriten as $ J(\theta,q):=\sum_{j=k}^{k+h-1}g(x^j,u^j)+v(x^{k+h})$. Hence, the robust MPC is reformulated as the following RCO $$\min_{ \eta, \theta\in\cU^h} \eta~\text{subject to}~J(\theta,q)-\eta\le 0, \forall q\in\cQ.$$\end{exmp}
\begin{exmp}[Distributed robust optimization]\label{exmp_dro}
Consider the distributed robust optimization problem
\bee\label{opt_distrcp}
\min_{\theta\in\Theta}~\sum_{j=1}^m f_j(\theta,q_j),
\ene
where $f_j$ is only known to node $j$ and $q_j\in\cQ_j$ represents the uncertainty in node $j$ and its bounding set. Moreover, $f_j(\theta,q_j)$ is convex in $\theta$ for any $q_j$ and is Borel measurable in $q_j$ for any fixed $\theta$.

From the worst-case point of view, we are interested in solving the following optimization problem
\bee
\min_{\theta\in\Theta}~ \sum_{j=1}^m \left(\max_{q_j\in\cQ_j}  f_j(\theta,q_j)\right). \label{opt_robust}
\ene

However, the uncertainty $q_j$ generically enters the objective function $f_j(\theta,q_j)$ in (\ref{opt_distrcp}) without any specific structure, so that the objective function cannot be explicitly found.
To solve (\ref{opt_robust}), we note that it is equivalent to the following optimization problem
\bee
\min_{\theta\in\Theta, t}~ \sum_{j=1}^m t_j~\text{subject to}~\max_{q_j\in\cQ_j}  f_j(\theta,q_j)- t_j \le 0, \forall j\in\cV.  \label{opt_scenarioex}
\ene

Let $f(t,\theta,q)=[f_1(\theta,q_j)- t_1,\ldots,f_m(\theta,q_m)- t_m]'$ where $t=[t_1,\ldots,t_m]'$ and $q=[q_1,\ldots,q_m]'$ and $\cQ=\cQ_1\times \ldots \times \cQ_m$. Then, the optimization in (\ref{opt_scenarioex}) is equivalent to
\bee
\min_{\theta\in\Theta, t}~ \sum_{j=1}^m t_j~\text{subject to}~f(t,\theta,q)\preceq 0, \forall q\in\cQ.\label{opt_scenarioex1}
\ene
\end{exmp}

Clearly, (\ref{opt_scenarioex1}) is RCO of the form in (\ref{rcp_opt}), except that $f_j$ is only known to node $j$. However, this is not an issue as discussed in Example \ref{rem_dro} in Section III-B.


\begin{exmp}[LASSO] Consider the least squares (LS) problem
\bee
\min\limits_v \twon{b-Xv},\nonumber
\ene
where $X\in\bR^{l\times n}$ is the regression matrix and $b$ is the measurement vector.
It is well-known that the LS solution has poor numerical properties when the regression matrix is {\em ill-conditioned}. A common approach for addressing it is to introduce $\ell^1$ {\em regularization} technique, which results in a LASSO problem
\bee
\min\limits_v \{\twon{b-Xv}+\sum_{i=1}^nc_i|v_i|\}, \nonumber
\ene
where $c_i>0$ quantifies the robustness of the solution with respect to the $i$-th column of $X$. By \cite{xu2010robust}, the LASSO is in fact equivalent to a robust LS problem
\bee\label{robustls}
\min\limits_v\max\limits_{q\in\cQ}\twon{b-(X+q)v}
\ene
with the following uncertainty set
$$
\cQ=\{[q_1,\ldots,q_n]|\twon{q_j}\le c_j,j=1,\ldots,n\}.
$$
\end{exmp}
From (\ref{robustls}), the LASSO is inherently robust to the uncertainty in the regression matrix $X$, and the weight factor $c_i$ quantifies its robustness performance.  Note that the optimization in (\ref{robustls}) can also be reformulated as RCO in (\ref{rcp_opt}).

\begin{exmp}[Distribution-free robust optimization] \label{exa_dfro}Consider a distribution-free robust optimization under moment constraints
\bee
\min_{\theta\in\Theta}\max_{{q}\in\cP}\bE[f(\theta,{q})]\label{dfro}
\ene
where $f(\theta,q)$ is a utility convex function in the decision variable $\theta$ for any given realization of the random vector $q$, and the expectation $\bE[\cdot]$ is taken with respect to ${q}$. Moreover, $\cP$ is a collection of random vectors with the same support, first- and second-moments
$$
\cP=\{{q}: \text{\rm supp}({q})=\cQ, \bE[{q}]=\mu,\bE[{q}{q}']=\Sigma\}.
$$

In light of \cite{delage2010distributionally} and the duality theory \cite{shapiro2001duality}, the optimization problem (\ref{dfro}) is equivalent to  RCO
\beq
&&\min_{\theta,\alpha,\beta,\Omega}\{\alpha+\mu'\beta+ {\rm Tr
(\Omega'\Sigma)}\}\nonumber\\
&&\hspace{-1cm}\text{subject to}~\theta\in\Theta, \alpha+q'\beta+q'\Omega q\ge f(\theta,q),\forall q\in\cQ,\nonumber.
\enq
Clearly, the optimization (\ref{dfro}) is reformulated as RCO of the same form as (\ref{rcp_opt}).
\end{exmp}

Although the stochastic programming (\ref{dfro}) is a convex optimization problem, one must often resort to Monte Carlo sampling to solve it, which is computationally challenging, as it may also need to find an appropriate sampling distribution. Unless $f$ has a special structure, it is very difficult to obtain such a distribution \cite{barmish1997uniform}. In the next section, we show how RCO can be effectively solved via a {\em scenario approach}.

\subsection{Scenario Approach for RCO}
\label{sec_sarco}

The design constraint $f(\theta,q)\le 0$ for all possible $q\in\cQ$ is crucial in the study of robustness of complex systems, e.g.  $\cH_{\infty}$ performance of a system affected by the parametric uncertainty and  the design of uncertain model predictive control \cite{petersentempo}. However, obtaining worst-case solutions has been proved to be computationally difficult, even NP-hard as the uncertainty $q$ may enter into $f(\theta,q)$ in a nonlinear manner. In fact, it is generally very difficult to explicitly characterize the constraint set with uncertainty, i.e.,
\bee
\{\theta|f(\theta, q)\le 0,\forall q\in\cQ \}, \label{rob_const}
\ene
which renders it impossible to directly solve RCO.  There are only few cases when the uncertainty set is tractable \cite{ben2009robust}. Furthermore, this approach introduces undesirable {\em conservatism}. For these reasons, we adopt the scenario approach.

Instead of satisfying the hard constraint in (\ref{rob_const}), the idea of this approach is to derive a probabilistic {\em approximation} by means of a finite number of random constraints, i.e,
\bee
\bigcap_{i=1}^{N_{bin}}\{\theta|f(\theta,q^{(i)})\le 0\}\label{rob_rand}
\ene
where $N_{bin}$ is a positive integer representing the constraint size, and $\{q^{(i)}\}\subseteq \cQ$ are independent identically distributed (i.i.d.) samples extracted according to an arbitrary absolutely continuous (with respect to the Lebesgue measure) distribution $\bP_q(\cdot)$ over $\cQ$.

Regarding the constraint in (\ref{rob_rand}), we only guarantee that most, albeit not all, possible uncertainty constraints in RCO are not violated. Due to the randomness of $\{q^{(i)}\}$, the set of  constraint in (\ref{rob_rand}) may be very close to its counterpart (\ref{rob_const}) in the sense of obtaining a small {\em violation probability}, which is now formally defined.

\begin{defi}[Violation probability] Given a decision vector $\theta \in\bR^n$, the violation probability $V(\theta)$ is defined as
\bee
V(\theta):=\bP_{q}\{q\in\cQ |f(\theta,q)>0\}. \nonumber
\ene
\end{defi}
The multi-sample $q^{1:N_{bin}}:=\{q^{(1)},\ldots,q^{(N_{bin})}\}$ is called a {\em scenario} and the resulting optimization problem under the constraint (\ref{rob_rand}) is referred to as a {\em scenario problem} (SP)
\beq
\min_{\theta\in\Theta} c'\theta~~\text{subject to}~~ f(\theta,q^{(i)})\le 0, i=1,\ldots, N_{bin}.\label{rcp_rand}
\enq

In the sequel, let $\Theta^*$ be the set of optimal solutions to the SP and $\Theta_0$ be the set of feasible solutions, i.e.,
\bee
\Theta_0=\{\theta\in\Theta|f(\theta,q^{(i)})\le 0, i=1,\ldots,N_{bin}\}.\label{set_feasible}
\ene

For the SP, we need the following  assumption to study its probabilistic relationship with RCO in (\ref{rcp_opt}).
\begin{assum}[Non-empty set of optimal solutions and interior point] \label{ass_interior}
The SP in (\ref{rcp_rand}) has a non-empty set of optimal solutions, i.e., $\Theta^*\neq \emptyset$. In addition, there exists a vector $\theta_0\in\Theta$ such that
\bee
f(\theta_0,q^{(i)})<0, \forall i=1,\ldots,N_{bin}.\label{interior}
\ene
\end{assum}
The interiority condition (often called Slater's constraint qualification) in (\ref{interior}) implies that there is no duality gap between the primal and dual problems of  (\ref{rcp_rand}) and the dual problem contains at least an optimal solution \cite{bertsekas1999nonlinear}.  We remark that in robust control it is common to study strict inequalities
\cite{petersentempo}, e.g., when dealing with robust asymptotic stability of a system and therefore this is not a serious restriction. In fact, the set of feasible solutions to (\ref{rcp_opt}) is a subset of that of the SP in (\ref{rcp_rand}). The main result of the scenario approach for RCO  is stated below.
\begin{lem}[\cite{campi2008exact}]Assume that there exists a unique solution to (\ref{rcp_rand}). Let $\epsilon$, $\delta \in (0,1)$, and $N_{bin}$ satisfy the following inequality
\bee
\sum_{i=0}^{n-1}\binom{N_{bin}}{i}\epsilon^i(1-\epsilon)^{N_{bin}-i} \le \delta.
\label{scenariosol}
\ene
Then, with probability at least $1-\delta$, the solution $\theta_{sc}$ of the scenario optimization problem (\ref{rcp_rand})
satisfies $V(\theta_{sc}) \le \epsilon$, i.e.,
\bee
\bP_{q^{1:N_{bin}}}\{V(\theta_{sc})\le\epsilon \}\ge 1-\delta.\nonumber
\ene
\end{lem}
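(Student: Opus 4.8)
The plan is to prove the equivalent statement that the failure probability is controlled by the same binomial tail, namely
\[
\bP_q\{V(\theta_{sc})>\epsilon\}\le\sum_{i=0}^{n-1}\binom{N_{bin}}{i}\epsilon^i(1-\epsilon)^{N_{bin}-i},
\]
where the probability is over the i.i.d.\ multisample $q^{1:N_{bin}}\in\cQ^{N_{bin}}$; combined with the hypothesis that the right-hand side is at most $\delta$, this yields $\bP_q\{V(\theta_{sc})\le\epsilon\}\ge1-\delta$. The central object is the notion of a \emph{support constraint}: an index $i$ whose removal from the SP in (\ref{rcp_rand}) strictly changes the (by hypothesis unique) optimizer $\theta_{sc}$. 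The first observation I would record is that $\theta_{sc}$ depends on $q^{1:N_{bin}}$ only through its set of support constraints, since deleting any non-support constraint leaves the optimizer unchanged; hence $\theta_{sc}=\theta_I$, where $\theta_I$ denotes the optimizer of the reduced SP that keeps only the constraints indexed by the support set $I$.

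The structural backbone is a Helly-type lemma: because the SP is convex with $n$ decision variables and (under Assumption~\ref{ass_interior}) satisfies Slater's condition, the number of support constraints is at most $n$. I would establish this by contradiction, using the no-duality-gap and bounded-multiplier consequences of Assumption~\ref{ass_interior} together with a Carath\'eodory/Helly argument on the sub-gradients of the active constraints at $\theta_{sc}$: if more than $n$ constraints were essential, a proper sub-collection would already pin down the optimizer, contradicting essentiality. The uniqueness hypothesis in the statement is precisely what removes degenerate ties among active constraints, so that $\theta_{sc}$ is fully supported by exactly $n$ constraints almost surely (configurations with fewer than $n$ support constraints only lower the probability and are absorbed by the same bound).

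With the support bound in hand, the probabilistic core is an exchangeability-plus-conditioning argument. For every index set $I\subseteq\{1,\dots,N_{bin}\}$ with $|I|=n$, the event that $I$ is the support set coincides with $\{\theta_{sc}=\theta_I\}$, and on this event each of the remaining $N_{bin}-n$ samples must be feasible for $\theta_I$, i.e.\ $f(\theta_I,q^{(j)})\le0$ for $j\notin I$. Conditioning on the $n$ samples indexed by $I$ fixes $\theta_I$; the other samples are i.i.d.\ and independent of it, so each is feasible with probability $1-V(\theta_I)$, and jointly with probability $(1-V(\theta_I))^{N_{bin}-n}$. Since the multisample is i.i.d., hence exchangeable, the $\binom{N_{bin}}{n}$ choices of $I$ contribute equally and partition the event $\{V(\theta_{sc})>\epsilon\}$, giving for a fixed reference set $I$
\[
\bP_q\{V(\theta_{sc})>\epsilon\}=\binom{N_{bin}}{n}\,\bE\!\left[\mathbf 1\{V(\theta_I)>\epsilon\}\,\bigl(1-V(\theta_I)\bigr)^{N_{bin}-n}\right].
\]

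Finally I would close the computation by identifying the law of $V(\theta_{sc})$. Writing $\theta^*_k$ for the optimizer of the $k$-sample SP, a clean symmetry argument shows that a fresh sample is a support constraint of the $(k{+}1)$-sample problem iff it violates $\theta^*_k$, whence $\bE[V(\theta^*_k)]=n/(k+1)$; pushing this idea to the full distribution (this is the delicate contribution of \cite{campi2008exact}) shows that, in the fully supported case, $V(\theta_{sc})\sim\mathrm{Beta}(n,N_{bin}-n+1)$, so that
\[
\bP_q\{V(\theta_{sc})>\epsilon\}=\int_\epsilon^1\frac{v^{\,n-1}(1-v)^{\,N_{bin}-n}}{B(n,\,N_{bin}-n+1)}\,dv=\sum_{i=0}^{n-1}\binom{N_{bin}}{i}\epsilon^i(1-\epsilon)^{N_{bin}-i},
\]
the last equality being the standard incomplete-beta/binomial-series identity; substituting the hypothesis on $N_{bin}$ then gives the claim. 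I expect this last distributional identification to be the main obstacle: bounding $(1-V(\theta_I))^{N_{bin}-n}\le(1-\epsilon)^{N_{bin}-n}$ on $\{V(\theta_I)>\epsilon\}$ only yields the crude estimate $\binom{N_{bin}}{n}(1-\epsilon)^{N_{bin}-n}$, which does \emph{not} match the binomial tail appearing in (\ref{scenariosol}); sharpening this into the tight Beta law, rather than the support-constraint lemma, is where the real work lies.
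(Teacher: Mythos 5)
The paper contains no proof of this lemma: it is quoted directly from \cite{campi2008exact}, so there is no internal argument to compare against, and your task here was in effect to reconstruct that reference. Your outline does so faithfully --- the bound of $n$ on the number of support constraints, the exchangeability decomposition over candidate support sets $I$ with the factor $(1-V(\theta_I))^{N_{bin}-n}$ (where, strictly, the displayed equality should be a ``$\le$'' in general and becomes an equality only almost surely in the nondegenerate, fully supported case you restrict to), and the identification of the law of $V(\theta_{sc})$ as $\mathrm{Beta}(n,\,N_{bin}-n+1)$ are exactly the steps of Campi and Garatti, and the two points you flag as delicate (absorbing the fewer-than-$n$-support configurations into the same upper bound, and sharpening the crude $\binom{N_{bin}}{n}(1-\epsilon)^{N_{bin}-n}$ estimate into the exact Beta tail via the consistency/moment argument) are precisely where that paper does its real work.
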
 The uniqueness condition can be relaxed in most
cases by introducing a tie-breaking rule, see Section 4.1 of \cite{calcammp}. If the sample complexity $N_{bin}$ satisfies (\ref{scenariosol}), a solution $\theta_{sc}$ to (\ref{rcp_rand})  approximately solves RCO in (\ref{rcp_opt}) with certain probabilistic guarantee. A subsequent problem is to compute the sample complexity, which dictates the smallest number of constraints required in the SP to solve (\ref{rcp_rand}). This problem has been addressed in \cite{alamo2015randomized} obtaining an improved bound
\bee
N_{bin}\ge \frac{e}{\epsilon(e-1)}(-\ln {\delta}+n-1)\label{scenariosize}
\ene
where $e$ is the Euler's number.
Thus, RCO in (\ref{rcp_opt}) can be approximately solved via the SP in (\ref{rcp_rand}) with a sufficiently large $N_{bin}$.

The remaining objective of this paper is to effectively solve the SP in (\ref{rcp_rand}) when $N_{bin}$ is large.

\section{Distributed Computation Scheme for Scenario Problems}
\label{sec_nps}
In this section, we introduce a distributed computational framework where many processors (nodes) with limited computational capability are interconnected via a  graph. Then, we reformulate the SP in (\ref{rcp_rand}) as a distributed optimization problem, which assigns some local constraints to each node and adapts the coupled constraint to the graph structure.

\subsection{Distributed Computing Nodes}
Although RCO in (\ref{rcp_opt}) can be effectively attacked via the scenario approach, clearly $N_{bin}$ may be large to achieve a high confidence level with small violation probability. For example, in a problem with $n=32$ variables, setting probability levels $\epsilon=0.001$ and $\delta=10^{-6}$, it follows from (\ref{scenariosize}) that the number of constraints in the SP is  $N_{bin}\ge 70898$. For such a large sample complexity $N_{bin}$, the computational cost for solving the SP (\ref{rcp_rand}) becomes very high, which may be far from the computational and memory capacity of a single processor.

To overcome this issue, we propose to use $m$ computing units (nodes) which cooperatively solve the SP in (\ref{rcp_rand})   in a distributed fashion. Then, the number of design constraints for node $j$ is reduced to $n_j$. To maintain the desired probabilistic guarantee, it follows from (\ref{scenariosize}) that
$\sum_{j=1}^m n_j \ge N_{bin}.$

A simple heuristic approach is to assign the number of constraints in (\ref{rcp_rand}) among nodes proportional to their computing and memory power. In practice, each node can declare the total number of constraints that can be handled.  If the number of nodes is comparable to the scenario size $N_{bin}$, the number of constraints for every node $j$ is significantly reduced, e.g. $n_j\ll N_{bin}$, and $n_j$ can be even as small as one.

The problem is then how to  distribute  the computational task  across multiple nodes to cooperatively solve the SP. To this end, we introduce a directed graph $\cG=\{\cV,\cE\}$ to model interactions between the computing nodes  where $\cV:=\{1,\ldots,m\}$ denotes the set of nodes, and the set of links between nodes is represented by $\cE$. A directed edge $(i,j)\in\cE$ exists in the graph if node $i$ directly receives information from node $j$. Then, the in-neighbors and out-neighbors of node $j$ are respectively defined by $\cN_j^{in}=\{i|(j,i)\in\cE\}$ and $\cN_j^{out}=\{i|(i,j)\in\cE\}$. Clearly, every node can directly receive information from its in-neighbors and broadcast information to its out-neighbors.  A sequence of directed edges $(i_1,i_2),\ldots,(i_{k-1},i_k)$ with $(i_{j-1},i_j)\in\cE$ for all $j\in\{2,\ldots,k\}$ is called a directed path from node $i_k$ to node $i_1$. A graph $\cG$ is said to contain a {\em spanning tree} if it has a root node that is connected to any other node in the graph via a directed path, and is  {\em strongly connected} if each node is connected to every other node in the graph via a directed path.

We say that $A=\{a_{ij}\}\in\bR^{m\times m}$ is a {\em row-stochastic} weighting matrix adapted to the underlying graph $\cG$, e.g., $a_{ij}>0$ if $(i,j)\in\cE$ and $0$, otherwise, and $a_{jj}=1-\sum_{i=1, i\neq j}^m a_{ji}\ge 0$ for all $j\in\cV$. Moreover, we denote the associated Laplacian matrix of $\cG$ by $\cL=I_m-A$. If $\cG$ is undirected, $A$ is a symmetric matrix and $\cN_j^{in}=\cN_j^{out}$, which is simply denoted as $\cN_j$.

Overall, the objective of this paper is to solve the following networked optimization problem.
\begin{prob}[Distributed scheme]\label{prob} Assume that $\cG$ is strongly connected. Then, each node  computes a solution to the SP  in (\ref{rcp_rand}) under the following setup:
\begin{enumerate}
\renewcommand{\labelenumi}{\rm(\alph{enumi})}
\item Every node $j$ is able to independently generate $n_j$ i.i.d. samples with an absolutely continuous distribution $\bP_{q}$, and is not allowed to share these samples with other nodes.
\item Every node is able to transmit finite dimensional data per packet via a directed/undirected edge.
\item The vector $c$ in the objective function, the constraint function $f(\theta,q)$ and the set $\Theta$ are accessible to every node.
\end{enumerate}
\end{prob}

In contrast with \cite{carlone2014distributed}, our approach transmits a fixed dimension state vector among nodes. In addition, each node $j$ only deals with a fixed number $n_j$ of constraints. In  \cite{carlone2014distributed}, each node requires to completely solve {\em local} SPs under an increasing number of constraints. We provide a more detailed comparison between our approach and \cite{carlone2014distributed} in Section \ref{rem_comparsion}.

\subsection{Reformulation of the Scenario Problem}

In this work, we propose recursive algorithms with small computation per iteration to distributedly solve the SP. This is particularly suited when several processors cooperate. The main idea is to introduce ``local copies" of $\theta$ in each node, and to optimize and update these variables by incrementally learning the constraints
until a consensus is reached among all the neighboring nodes. The interactions between nodes are made to (indirectly) obtain the constraint set information from other nodes.

Let $q^{(j1)},\ldots,q^{(jn_j)}$ be the samples that are independently generated in node $j$ according to the distribution $\bP_q$. For simplicity, the local constraint functions are collectively rewritten in a vector form
$$
f_j(\theta):=\begin{bmatrix}f(\theta,q^{(j1)}) \\ \vdots \\ f(\theta,q^{(jn_j)}) \end{bmatrix}\in\bR^{n_j}.
$$

Then, the SP in (\ref{rcp_rand}) is equivalent to the following constrained minimization problem
\bee
\min_{\theta\in\Theta} c' \theta~\text{subject to}~f_j(\theta)\preceq 0, \forall j\in\cV, \label{scenarioproblem1}
\ene
where $f_j(\theta)$ is only known to node $j$.

 \begin{exmp}[Continuation of Example \ref{exmp_dro}] \label{rem_dro}
In (\ref{opt_scenarioex1}), the $j$-th component function of $f$ is only known to node $j$. Then, node $j$ can independently extract random samples $\{q_j^{(1)},\ldots,q_j^{(n_j)}\}$ from $\cQ_j$ and obtain the  local inequality
\bee
\tilde{f}_j(\theta,t):=\begin{bmatrix}f_j(\theta,q_j^{(1)})-t_j \\ \vdots \\ f_j(\theta,q_j^{(n_j)})-t_j \end{bmatrix}\preceq 0,
\ene
which is only known to node $j$. Thus, the SP associated with the distributed robust optimization in (\ref{opt_scenarioex1}) has the same form of (\ref{scenarioproblem1}) and can be solved as well.
\end{exmp}

Since each node may have very limited computational and memory capability, the algorithm for each node should be easy to implement with a low computational cost. To achieve this goal, we adopt two different approaches in the sequel for undirected and directed graphs, respectively. The first approach (for undirected graphs) exploits the simple structure of a {\em primal-dual} sub-gradient algorithm \cite{bertsekas1999nonlinear} which has an explicit recursive form. Moreover, the interpretation of this approach is natural from the viewpoint of optimization theory. It requires a bidirectional information flow between nodes and therefore it is not applicable to directed graphs. To overcome this limitation, the second approach (for directed graphs) revisits the idea of Polyak random algorithm for convex feasibility problem \cite{polyak2001random}. We remark that in
\cite{polyak2001random}
the algorithms are centralized and do not address distributed computation, which is resolved in this paper by exploiting the network structure.

Next, we show that the SP can be partially separated by adapting it to the network $\cG$.
\begin{lem}[Optimization equivalence]
\label{lem_equivalence}
Assume that $\cG$ contains a spanning tree. Then, the optimal solution to the SP in (\ref{rcp_rand}) can be found via the following optimization problem
\beq
\min_{\theta_1,\ldots,\theta_m\in\Theta}&&\hspace{-0.5cm} \sum_{j=1}^m c'\theta_j~\text{subject to}~\nonumber \\
&& \hspace{-1.5cm} \sum_{i=1}^ma_{ji}(\theta_j-\theta_i)=0, \label{const_consensus}\\
&&\hspace{-1.5cm} f_j(\theta_j)\preceq 0, \forall j\in\cV.\label{const_ineq}
\enq
\end{lem}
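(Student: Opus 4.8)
The plan is to show that the coupling constraint (\ref{const_consensus}) is algebraically equivalent to forcing consensus $\theta_1=\cdots=\theta_m$, after which (\ref{const_consensus})--(\ref{const_ineq}) collapses to the separable problem (\ref{scenarioproblem1}), which is exactly the SP (\ref{rcp_rand}) written in stacked form. First I would rewrite (\ref{const_consensus}) compactly. Since $A$ is row-stochastic, $\sum_{i=1}^m a_{ji}=1$, so $\sum_{i=1}^m a_{ji}(\theta_j-\theta_i)=\theta_j-\sum_{i=1}^m a_{ji}\theta_i$, and stacking $\Theta:=[\theta_1',\ldots,\theta_m']'$ the whole family of constraints (\ref{const_consensus}) reads $(\cL\otimes I_n)\Theta=0$ with $\cL=I-A$. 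Thus (\ref{const_consensus}) is precisely the requirement $\Theta\in\ker(\cL\otimes I_n)$.

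The crux is to characterize this kernel. I would prove that, under the spanning-tree hypothesis, $\rk(\cL)=m-1$ and $\ker(\cL)=\mathrm{span}\{{\bf 1}_m\}$. One inclusion is immediate: row-stochasticity gives $\cL{\bf 1}_m={\bf 1}_m-A{\bf 1}_m=0$. For the reverse inclusion I cannot rely on the symmetric-Laplacian argument, because for a directed $\cG$ the matrix $A$ is merely nonnegative and row-stochastic. Instead I would invoke Perron--Frobenius theory: $A$ is a nonnegative matrix with spectral radius $1$, attained at the eigenvalue $1$ with eigenvector ${\bf 1}_m$, and a directed graph contains a spanning tree if and only if this eigenvalue $1$ is simple, equivalently if and only if $0$ is a simple eigenvalue of $\cL$ (here the nonnegative diagonal $a_{jj}\ge 0$ together with $a_{ij}>0$ on the edges guarantees that $A$ is adapted to $\cG$). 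This is the standard connectivity result for directed consensus. Hence $\ker(\cL)$ is one-dimensional and spanned by ${\bf 1}_m$, and by the elementary property of Kronecker products $\ker(\cL\otimes I_n)=\ker(\cL)\otimes\bR^n=\{{\bf 1}_m\otimes v:v\in\bR^n\}$. A stacked vector lies in this set exactly when $\theta_1=\cdots=\theta_m$.

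Finally I would substitute the consensus condition back into (\ref{const_consensus})--(\ref{const_ineq}). Writing the common value as $\theta$, the objective becomes $\sum_{j=1}^m c'\theta_j=m\,c'\theta$ and the decoupled constraints (\ref{const_ineq}) become $f_j(\theta)\preceq 0$ for all $j\in\cV$, which is exactly the constraint set of (\ref{scenarioproblem1}) and hence of the SP (\ref{rcp_rand}). Since the positive scalar $m$ does not affect the set of minimizers, $(\theta_1^\star,\ldots,\theta_m^\star)$ solves (\ref{const_consensus})--(\ref{const_ineq}) if and only if $\theta_1^\star=\cdots=\theta_m^\star=\theta^\star$ with $\theta^\star\in\Theta^*$, which establishes the claimed equivalence. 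I expect the only nontrivial step to be the kernel characterization of the directed Laplacian: the reformulation and the final substitution are bookkeeping, but that step genuinely uses the spanning-tree assumption through Perron--Frobenius rather than the easier undirected symmetry argument.
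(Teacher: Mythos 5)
Your proposal is correct and follows essentially the same route as the paper's proof: stack the variables, rewrite (\ref{const_consensus}) as $(\cL\otimes I_n)\theta=0$, use the spanning-tree hypothesis to conclude $\ker(\cL\otimes I_n)=\{{\bf 1}_m\otimes v\}$ so the constraint forces consensus, and observe that the problem then collapses to $m\,c'\theta$ minimized over the feasible set of (\ref{rcp_rand}). The only difference is that the paper cites a reference for the kernel characterization while you supply the Perron--Frobenius argument explicitly (and you should avoid reusing the symbol $\Theta$ for the stacked vector, since it already denotes the constraint set).
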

\begin{proof} By a slight abuse of notation, let  $\theta$ be the augmented state of $\theta_j$, i.e., $\theta=[\theta_1',\ldots,\theta_m']'$, and $\cL=I_m-A$, which is the associated Laplician matrix of the graph $\cG$. Then, the constraint in (\ref{const_consensus}) is compactly written as
$
(\cL\otimes I_n){\theta}=0.
$
This is equivalent to $\theta_1=\theta_2=\ldots=\theta_m$ as $\cG$ contains a spanning tree \cite{you2011network}. Thus, the above optimization problem is reduced to 
$$
\min_{\{\theta\in\Theta| f_j(\theta)\preceq 0, \forall j\in\cV\}} (m\cdot c'\theta)
$$
whose set of optimal solutions is equivalent to that of (\ref{rcp_rand}).
\end{proof}

A nice feature of Lemma \ref{lem_equivalence} is that both the objective function and the constraint in (\ref{const_ineq}) of each node are completely decoupled. The only coupled constraint lies in the consensus constraint in (\ref{const_consensus}), which is required to align the state of each node, and can be handled by exploring the graph structure under {\em local} interactions.  Since each node uses it to learn information from every other node, we need the following assumption.
\begin{assum}[Strong connectivity] \label{ass_strong}
The graph $\cG$ is strongly connected.
\end{assum}
As the constraint in (\ref{const_ineq}) is only known to node $j$, this assumption is clearly necessary. Otherwise, there exists a node $i$ that can never be accessed by some other node $j$. In this case, it is impossible for node $j$ to find a solution to the SP (\ref{rcp_rand})  since the information on $f_i(\theta)$ is always missing to node $j$.

\section{Distributed Primal-dual Sub-gradient Algorithms for Undirected Graphs}
\label{sec_npsa}
Recently, several papers concentrated on the distributed optimization problem of the form in Lemma \ref{lem_equivalence}, see e.g. \cite{chatzipanagiotis2015augmented,lee2016asynchronous,nedich2015convergence,varagnolo2016newton,margellos2016distributed,lou2014approximate}  and references therein. However, they mostly consider a generic local constraint set, i.e., the local constraint (\ref{const_ineq}) is replaced by $\theta_j\in\Theta_j$ for some convex set $\Theta_j$, rather than having an explicit inequality form. Thus, the proposed algorithms require a projection onto the set $\Theta_j$ at each iteration to ensure feasibility. This is easy to perform only if $\Theta_j$ has a relatively simple structure, e.g., a half-space or a polyhedron. Unfortunately, the computation of  the projection onto the set
\bee
\Theta_j=\{\theta\in\bR^n|f_j(\theta)\preceq 0\} \label{localset}
\ene is typically difficult and computational demanding. This work does not use projection to handle the inequality constraints. Rather, we exploit the inequality functions by designing distributed primal-dual algorithms for undirected graphs with the aid of an Lagrangian function. Then, we prove that the recursive algorithm in each node asymptotically converges to some common optimal solution of (\ref{rcp_rand}).

Since $\Theta$ is closed and convex, the optimization problem in Lemma \ref{lem_equivalence} is reformulated with equality constraints
\beq
\min&& \hspace{-0.3cm} \sum_{j=1}^m c'\theta_j+h_\rho(\theta)\label{augopt}\\
&&\hspace{-1.5cm}  ~\text{subject to}~ (\cL_j\otimes I_n){\theta}=0,  g_j(\theta_j)=0, \forall j\in\cV\nonumber
\enq
where $\cL_j$ is the $j$-th row of the Laplacian matrix $\cL$, and $g_j(\theta_j)$ is a function only related to the local constraint of node $j$, i.e.,
$$g_j(\theta_j)=\begin{bmatrix} d(\theta_j,\Theta) \\  f_j(\theta_j)_{+} \end{bmatrix}.$$ The distance function $d(\theta,\Theta)$ measures the distance from the point $\theta$ to the set $\Theta$ and is obviously convex in $\theta_j$. Since $\Theta$ is closed and convex, then $d(\theta,\Theta)=0$ if and only if $\theta\in\Theta$.

With a slight abuse of notation, we use $\theta=[\theta_1',\ldots,\theta_m']'$ to denote the augmented state of $\theta_j$. The added quadratic penalty function is defined as
$$
h_\rho(\theta)=\frac{\rho}{2}\sum_{j=1}^m\twon{(\cL_j\otimes I_n)\theta}^2+\|g_j(\theta_j)\|^2
$$
and  $\rho>0$ is a given weighting parameter.

\subsection{Distributed Primal-dual Sub-gradient Algorithm}
To solve the optimization problem (\ref{augopt}), we focus on the following  {\em Lagrangian}
\bee\label{fun_lag}
 L({\theta},\lambda,\gamma)=\sum_{j=1}^m  L_j({\theta},\lambda_j,\gamma_j) \ene
with the {\em local} Lagrangian $L_j({\theta},\lambda_j,\gamma_j)$ defined as
 $$
L_j=c'\theta_j+\lambda_j'(\cL_j\otimes I_n){\theta}+\gamma_j' g_j(\theta_j)+h_\rho(\theta)
$$
where $\lambda_j$ and $\gamma_j$ are the Lagrange multipliers corresponding to (\ref{const_consensus}) and (\ref{const_ineq}), respectively. Then, our objective reduces to find a saddle point $({\theta}^*,\lambda^*,\gamma^*)$ of the Lagrangian $L$ in (\ref{fun_lag}), i.e., for any $({\theta},\lambda,\gamma)$, it holds that
\bee
L({\theta}^*,\lambda,\gamma)\le L({\theta}^*,\lambda^*,\gamma^*)\le L(\theta,\lambda^*,\gamma^*).\label{saddle}
\ene

The existence of a saddle point is ensured under Assumptions \ref{ass_interior} and \ref{ass_strong}, as stated below.
\begin{lem}[Saddle point] \label{lem_saddle}Under Assumptions \ref{ass_interior} and \ref{ass_strong}, there exists a saddle point $({\theta}^*,\lambda^*,\gamma^*)$ of the  Lagrangian $L$ in (\ref{fun_lag}).
\end{lem}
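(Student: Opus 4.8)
The plan is to deduce the existence of a saddle point of the augmented Lagrangian $L$ from that of the \emph{ordinary} Lagrangian $L_0(\theta,\lambda,\gamma)=\sum_{j=1}^m\big(c'\theta_j+\lambda_j'(\cL_j\otimes I_n)\theta+\gamma_j'g_j(\theta_j)\big)$, obtained by discarding the quadratic penalty $h_\rho$ in (\ref{fun_lag}). The reduction I would use is elementary. If $(\theta^*,\lambda^*,\gamma^*)$ is a saddle point of $L_0$, then letting $\lambda,\gamma$ range freely in the left inequality of (\ref{saddle}) forces primal feasibility $(\cL_j\otimes I_n)\theta^*=0$ and $g_j(\theta_j^*)=0$ for every $j\in\cV$; hence $h_\rho(\theta^*)=0$ and $L(\theta^*,\lambda,\gamma)=L_0(\theta^*,\lambda,\gamma)\le L_0(\theta^*,\lambda^*,\gamma^*)=L(\theta^*,\lambda^*,\gamma^*)$. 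Conversely, since $h_\rho\ge 0$, we have $L(\theta,\lambda^*,\gamma^*)=L_0(\theta,\lambda^*,\gamma^*)+h_\rho(\theta)\ge L_0(\theta,\lambda^*,\gamma^*)\ge L_0(\theta^*,\lambda^*,\gamma^*)=L(\theta^*,\lambda^*,\gamma^*)$, so the same triple is a saddle point of $L$. It therefore suffices to produce a saddle point of $L_0$.

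First I would establish such a saddle point by invoking strong duality for the SP itself. Under Assumption \ref{ass_interior} the problem (\ref{rcp_rand}) is feasible, attains its minimum, and admits the Slater point $\theta_0$ with $f(\theta_0,q^{(i)})<0$ for all $i$; by standard convex duality \cite{bertsekas1999nonlinear} there is then no duality gap and a KKT triple exists, i.e. an optimal $\bar\theta\in\Theta^*$ and multipliers $\mu^*\succeq 0$ with $0\in c+N_\Theta(\bar\theta)+\sum_{j,k}\mu_{jk}^*\,\partial f(\bar\theta,q^{(jk)})$ and complementary slackness $\mu_{jk}^*\,f(\bar\theta,q^{(jk)})=0$, where $N_\Theta(\bar\theta)$ denotes the normal cone of $\Theta$ at $\bar\theta$ and $\{q^{(jk)}\}$ are the samples drawn at node $j$.

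Next I would lift this data into the reformulated coordinates of (\ref{augopt}). I set $\theta^*=\bone\otimes\bar\theta$, which is optimal for (\ref{augopt}) by Lemma \ref{lem_equivalence}; I transfer the inequality multipliers by placing an appropriately scaled copy of $\mu_{jk}^*$ into the $f_j$-block $\gamma_j^{f*}$ of $\gamma_j^*$, and I select the distance-function entries $\gamma_j^{0*}\ge 0$ so that $\sum_j\gamma_j^{0*}\,\partial d(\bar\theta,\Theta)$ reproduces the normal-cone term, using $\partial d(\bar\theta,\Theta)=N_\Theta(\bar\theta)\cap\{\twon{\cdot}\le 1\}$. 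Because $\theta^*$ is feasible, the left saddle inequality for $L_0$ holds with equality, so the crux is the right inequality, namely that $\theta^*$ minimizes $\theta\mapsto L_0(\theta,\lambda^*,\gamma^*)$. Since this map is convex once $\gamma^*\succeq 0$ (which the above choice guarantees), the right inequality reduces to the stationarity inclusion whose $j$-th block, using that $\cL$ is symmetric for an undirected graph, reads $0\in c+\big((\cL\otimes I_n)\lambda^*\big)_j+\gamma_j^{0*}\,\partial d(\bar\theta,\Theta)+\sum_k\gamma_{jk}^{f*}\,\partial\big(f(\bar\theta,q^{(jk)})_{+}\big)$.

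The hard part will be to exhibit a consensus multiplier $\lambda^*$ solving this coupled block system. Summing the blocks annihilates the consensus term, since $\bone'\cL=0$, and leaves the aggregate requirement $mc+\sum_j\gamma_j^{0*}\,\partial d(\bar\theta,\Theta)+\sum_{j,k}\gamma_{jk}^{f*}\,\partial\big(f(\bar\theta,q^{(jk)})_{+}\big)\ni 0$, which is exactly the $m$-fold SP stationarity of the second paragraph once I take $\gamma_{jk}^{f*}=m\mu_{jk}^*$ and split $m\,N_\Theta(\bar\theta)$ among the $\gamma_j^{0*}$; complementary slackness for $\gamma^*$ is then inherited from that of $\mu^*$. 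With the aggregate residual equal to zero it lies in $(\bone\otimes\bR^n)^\perp$, and since the undirected graph is strongly connected (Assumption \ref{ass_strong}) the symmetric $\cL$ has $\ker\cL=\mathrm{span}\{\bone\}$, whence $\mathrm{range}(\cL\otimes I_n)=(\bone\otimes\bR^n)^\perp$ and the per-block residuals are matched by a suitable $\lambda^*$. Verifying this range/solvability step, together with the fact that the positive-part subdifferential at an active constraint does carry the direction $\partial f(\bar\theta,q^{(jk)})$ (while vanishing at inactive ones, consistent with $\mu_{jk}^*=0$ there), is where the real work lies; once this is settled, the saddle inequalities for $L_0$, and hence for $L$, follow by convexity.
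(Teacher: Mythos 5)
Your proposal is correct, and its skeleton coincides with the paper's: obtain a saddle point (equivalently, a Kuhn--Tucker pair) for the SP (\ref{rcp_rand}) from Assumption \ref{ass_interior} via Propositions 5.1.6 and 5.3.1 of \cite{bertsekas1999nonlinear}, then transfer it to the reformulated problem through the equivalence of Lemma \ref{lem_equivalence}. The difference is that the paper's proof stops there and declares the transfer ``immediate,'' whereas you actually carry it out, and the pieces you supply are exactly the ones that are not entirely trivial: (i) the observation that, since $h_\rho\ge 0$ vanishes at any primal-feasible point, a saddle point of the ordinary Lagrangian $L_0$ is automatically one of the augmented Lagrangian $L$; (ii) the translation of the constraints $\theta\in\Theta$ and $f_j(\theta_j)\preceq 0$ into the functions $d(\cdot,\Theta)$ and $f_j(\cdot)_+$, using $\partial d(\bar\theta,\Theta)=N_\Theta(\bar\theta)\cap\{\twon{\cdot}\le 1\}$ and the fact that $\partial f(\bar\theta,q^{(jk)})\subseteq\partial\big(f(\bar\theta,q^{(jk)})_+\big)$ at active constraints while $\partial\big(f_+\big)=\{0\}$ at inactive ones; and (iii) the construction of the consensus multiplier $\lambda^*$, where summing the per-node stationarity blocks recovers the SP optimality condition and the residual system is solvable because strong connectivity gives $\ker\cL=\mathrm{span}\{\bone\}$, hence $\mathrm{range}(\cL\otimes I_n)=(\bone\otimes\bR^n)^\perp$ for the symmetric Laplacian of an undirected graph. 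All three steps check out (note that for step (iii) symmetry is convenient but not essential, since $r\in\mathrm{range}(\cL'\otimes I_n)$ already follows from $\sum_j r_j=0$ and $\ker\cL=\mathrm{span}\{\bone\}$). What your version buys is a self-contained, constructive verification that the multipliers claimed to exist really can be assembled for the specific penalized, graph-adapted Lagrangian (\ref{fun_lag}); what the paper's version buys is brevity, at the cost of leaving the reader to reconstruct precisely the argument you wrote down.
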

\begin{proof} Under Assumption \ref{ass_interior},  it follows from Propositions 5.1.6 and 5.3.1 in \cite{bertsekas1999nonlinear} that there exists a saddle point for the optimization (\ref{rcp_rand}).  By the equivalence of the SP in (\ref{rcp_rand}) and the problem in Lemma \ref{lem_equivalence}, the rest of proof follows.
\end{proof}
 By the Saddle Point Theorem (see e.g. Proposition 5.1.6 in \cite{bertsekas1999nonlinear}), it is sufficient to find a saddle point of the form (\ref{saddle}). In the section, we design a distributed primal-dual sub-gradient method to achieve this goal.

If $0\preceq\gamma$, then $L({\theta},\lambda,\gamma)$ is convex in each argument, e.g. $L(\cdot,\lambda,\gamma)$ is convex for any fixed $(\lambda,\gamma)$ satisfying $0\preceq\gamma$. Thus, the following set-valued mappings
\beq
T_j({\theta},\lambda,\gamma)&=&\partial_{\theta_j} L({\theta},\lambda,\gamma), \nonumber \\
P_j({\theta},\lambda,\gamma)&=&-\partial_{(\lambda_j,\gamma_j)} L({\theta},\lambda,\gamma)\nonumber
\enq
are well-defined where $\partial_{\theta_j} L({\theta},\lambda,\gamma)$ is the subdifferential of $L$ in $\theta_j$ \cite{bertsekas1999nonlinear}. The optimality of a saddle point $({\theta}^*,\lambda^*,\gamma^*)$ becomes
$0\in T_j({\theta}^*,\lambda^*,\gamma^*)~\text{and}~0\in P_j({\theta}^*,\lambda^*,\gamma^*),$
which is solved via the following iteration
\bee
\theta_j^{k+1}=\theta_j^k-\zeta^k \cdot T_j^k~\text{and}~\nu_j^{k+1}=\nu_j^k-\zeta^k \cdot P_j^k.\label{subgradient}
\ene
Here it is sufficient to arbitrarily select $T_j^k\in T_j(\theta^k,\lambda^k,\gamma^k)$ and $P_j^k\in P_j(\theta^k,\lambda^k,\gamma^k)$. The purpose of $\nu_j^k$ is to compute the Lagrange multipliers of $(\lambda_j^*,\gamma_j^*)$. The stepsizes  satisfy the following condition
\bee\label{stepsize}
\zeta^k>0, ~~ \sum_{k=0}^{\infty}\zeta^k=\infty, ~\text{and}~ \sum_{k=0}^{\infty}(\zeta^k)^2<\infty.
\ene

Next, we show that the sub-gradient iteration in (\ref{subgradient}) can be distributedly computed via Algorithm \ref{alg_dist} for undirected graphs.  For notational simplicity, the dependence of the superscript $k$, which denotes the number of iterations, is removed.  In Algorithm \ref{alg_dist}, every node keeps updating a triple of state vector and Lagrange multipliers $(\theta_j,\lambda_j,\gamma_j)$ by receiving information only from its neighboring nodes $i\in\cN_j$, see Fig. \ref{fig_infor}.  Notice from  (\ref{fun_lag}) that $(\lambda_j,\gamma_j)$ is a pair of Lagrange multipliers that only appears in the local Lagrangian $L_j$. This implies  that
\beq
P_j^k&=&-\begin{bmatrix} \sum\limits_{i=1}^m a_{ji}(\theta_j^k-\theta_i^k)\\
g_j(\theta_j^k) \end{bmatrix}.\nonumber
\enq

Clearly, $\sum_{i=1}^m a_{ji}(\theta_j^k-\theta_i^k)$ in $P_j^k$ is computable in node $j$ by receiving information only from in-neighbors of node $j$.  As $g_j(\theta_j^k)$
is a  function of local variables, $P_j^k$ is accessible to node $j$ via only local interactions with its in-neighbors.
By the additive property of the subdifferential \cite{bertsekas1999nonlinear}, we further obtain  from  (\ref{fun_lag}) that
 \beq
T_j(\theta^k,\lambda^k,\gamma^k)
&=&c+\sum_{i=1}^m l_{ij}\left(\lambda_i^k+\rho \cdot (\cL_i\otimes I_n)\theta^k\right) \nonumber \\
&& +s_j'(\gamma_j^k+\rho\cdot g_j(\theta_j^k)),\nonumber
 \enq
where $l_{ij}$ is the $(i,j)$-th element of the Laplacian matrix $\cL$ and $s_j$ represents a subgradient of $g_j(\cdot)$ at $\theta_j$, i.e., let $\nabla_j$ be a subgradient of $f(\cdot)_+$ at $\theta_j$, then
\bee\label{subgrs}
s_j'=\left [ \frac{\theta_j-\Pi_{\Theta}(\theta_j)}{\twon{\theta_j-\Pi_{\Theta}(\theta_j)}}, \nabla_j'\right]\in\bR^{n\times (n_j+1)}.\ene

Similarly, the second term in the sum $$(\cL_i\otimes I_n)\theta^k= \sum\limits_{j=1}^m a_{ij}(\theta_i^k-\theta_j^k)$$ is locally computable in node $i$.  Together with the fact $\cG$ is undirected, both  in-neighbors and out-neighbors of node $j$ are of the same. Thus, the second term in $T_j(\theta^k,\lambda^k,\gamma^k)$ is obtained by aggregating the modified Lagrange multiplier $\tilde{\lambda}_i^k:=\lambda_i^k+\rho \cdot (\cL_i\otimes I_n)\theta^k$ from its out-neighbors. This further implies that node $j$ is able to compute $T_j(\theta^k,\lambda^k,\gamma^k)$ via local interactions as well. 
\begin{algorithm}[t!]
\caption{Distributed primal-dual algorithm for the SP with undirected graphs}
\label{alg_dist}
\begin{enumerate}
\renewcommand{\labelenumi}{\theenumi:}
\item {\bf Initialization:} Each node $j\in\cV$ sets
$\theta_j=0$, $\gamma_j=0$, and $\lambda_j=0$.
\item {\bf Repeat}
\item {\bf Local information exchange:} Every node $i\in\cV$ broadcasts $\theta_i$ to its neighbor $j\in\cN_i$, computes
$b_i=\sum_{j\in\cN_i}a_{ij}(\theta_i-\theta_j)$ after receiving $\theta_j$ from neighbor $j\in\cN_i$, then broadcasts $\tilde{\lambda}_i=\lambda_i+\rho b_i$ to node $j\in\cN_i$, see Fig. \ref{fig_infor} for an illustration.
\item {\bf Local variables update}: Every node $j\in\cV$ updates $(\theta_j, \lambda_j,\gamma_j)$ as follows
\beq
\lambda_j&\hspace{-0.3cm} \leftarrow&\hspace{-0.3cm} \lambda_j+\zeta\cdot b_j,\nonumber\\
\gamma_j&\hspace{-0.3cm} \leftarrow&\hspace{-0.3cm} \gamma_j+\zeta \cdot g_j(\theta_j),\nonumber\\
\theta_j&\hspace{-0.3cm} \leftarrow&\hspace{-0.3cm}  \theta_j -\zeta \cdot \big(c+s_j'\tilde{\gamma}_j+\sum_{i\in\cN_j}a_{ij}(\tilde{\lambda}_j-\tilde{\lambda}_i)\big)\nonumber
\enq
where $\tilde{\gamma}_j=\gamma_j+\rho \cdot g_j(\theta_j)$, and $s_j$ is a subgradient of $g_j(\cdot)$ at $\theta_j$, see (\ref{subgrs}).

\item {\bf Set} $k=k+1$.
\item {\bf Until} a predefined stopping rule (e.g., a maximum iteration number) is satisfied.
 \end{enumerate}
\end{algorithm}
\begin{figure}[t!]
  \centering
 \includegraphics[width=7cm]{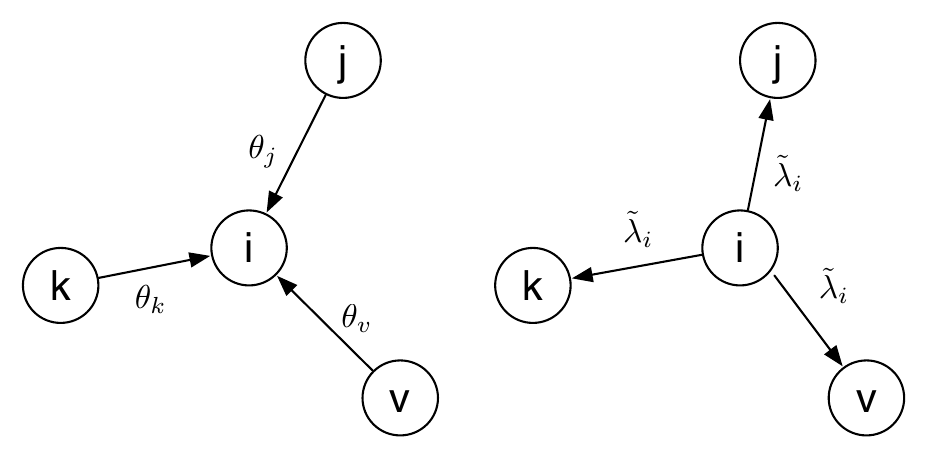}
  \caption{Local information exchange: every node $i$ receives $\theta_j, \forall j\in\cN_i$ from its in-neighbors to compute $b_i$ and $\tilde{\lambda}_i$, after which it broadcasts $\tilde{\lambda}_i$ to out-neighbors.}\label{fig_infor}
\end{figure}

\subsection{Convergence of Algorithm \ref{alg_dist}}

The update of Lagrange multipliers in Algorithm \ref{alg_dist} has interesting interpretation. If $\theta_j$ does not satisfy the local constraint, i.e., ${\bf 1}'g_j(\theta_j)> 0$, some element of the multiplier vector $\gamma_j$ is strictly increased and a larger penalty is imposed on the augmented Lagrangian $L$.  This forces the update of $\theta_j$ to move toward the local feasible set $\Theta\cap \Theta_j$, where $\Theta_j$ is given in (\ref{localset}). If $\gamma_j$ is bounded and the sequence $\{\theta^k\}$ is convergent, it follows that
$\sum_{k=1}^\infty \zeta^k g_j(\theta_j^k) <\infty $ and $\sup_k \twon{P_j^k} <\infty$. In light of (\ref{stepsize}), this implies that $\liminf_{k\rainfty}g_j(\theta_j^k)=0$. Then, the sequence $\{\theta_j^{k}\}$ will eventually enter the local constraint set $\Theta\cap \Theta_j$. Similarly, the multiplier $\lambda^k$ will finally drive the state vector $\theta_j^k$ to reach a consensus in each node. Based on these two observations, it follows that  $\{\theta_j^{k}\}$ finally becomes feasible.  The convergence of Algorithm \ref{alg_dist} is   stated and  proved.

\begin{thm}[Convergence] \label{thm_convergence} Suppose that Assumptions \ref{ass_interior}-\ref{ass_strong} hold and there is a positive $r$ such that $\max\{\twon{T_j^k},\twon{P_j^k}\}\le r$ for all $k$. Then, the sequence  $\{\theta_j^k\}$ of Algorithm \ref{alg_dist} with  stepsizes given in (\ref{stepsize})
converges to some common point in the set $\Theta^*$ of the optimal solutions to (\ref{rcp_rand}).
\end{thm}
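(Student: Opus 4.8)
The plan is to track the squared distance of the full primal--dual iterate to a fixed saddle point, show this quantity is an almost-supermartingale, and then upgrade the resulting subsequential optimality to convergence of the whole sequence. Fix a saddle point $z^*=(\theta^*,\lambda^*,\gamma^*)$, which exists by Lemma \ref{lem_saddle}, and set $V^k=\|\theta^k-\theta^*\|^2+\|\lambda^k-\lambda^*\|^2+\|\gamma^k-\gamma^*\|^2$. Expanding $V^{k+1}$ through the primal update $\theta_j^{k+1}=\theta_j^k-\alpha_j^k T_j^k$ and the dual update $\nu_j^{k+1}=\nu_j^k-\beta_j^k P_j^k$ produces, for each node, a cross term and a quadratic term. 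The role of the normalized stepsizes (\ref{stepsize1}) is precisely to control the quadratic part: since $\max\{1,\|T_j^k\|\}\ge\|T_j^k\|$ and likewise for $P_j^k$, one has $(\alpha_j^k)^2\|T_j^k\|^2\le(\zeta^k)^2$ and $(\beta_j^k)^2\|P_j^k\|^2\le(\zeta^k)^2$, so summing over $j$ leaves a residual bounded by $2m(\zeta^k)^2$, which is summable by (\ref{stepsize}).

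First I would convert the cross terms into Lagrangian gaps. Because $L(\cdot,\lambda,\gamma)$ is convex in $\theta$ for $0\preceq\gamma$ and $L(\theta,\cdot,\cdot)$ is affine, hence concave, in $(\lambda,\gamma)$, the subgradient inequalities give $(T_j^k)'(\theta_j^k-\theta_j^*)\ge L(\theta^k,\nu^k)-L(\theta^*,\nu^k)$ and $(\partial_\nu L)'(\nu^k-\nu^*)\le L(\theta^k,\nu^k)-L(\theta^k,\nu^*)$. If the leading stepsize were common across primal and dual, these would combine to $2\zeta^k[L(\theta^*,\nu^k)-L(\theta^k,\nu^*)]$, which the saddle inequalities $L(\theta^*,\nu^k)\le L(\theta^*,\nu^*)\le L(\theta^k,\nu^*)$ render nonpositive.

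The hard part will be the mismatch between the normalized stepsizes $\alpha_j^k$ and $\beta_j^k$: they differ from the common $\zeta^k$ and from each other, so the clean cancellation above is only approximate. I would write $\alpha_j^k=\zeta^k-(\zeta^k-\alpha_j^k)$, and similarly for $\beta_j^k$, retain the $\zeta^k$-part to obtain the nonpositive saddle term $2\zeta^k[L(\theta^*,\nu^k)-L(\theta^k,\nu^*)]$, and absorb the correction terms using $\alpha_j^k\|T_j^k\|\le\zeta^k$ together with boundedness of the iterates. Boundedness itself I would obtain simultaneously from the recursion: controlling the corrections yields $V^{k+1}\le V^k-u^k+w^k$ with $u^k\ge 0$ capturing the scaled saddle gap and $\sum_k w^k<\infty$; a deterministic almost-supermartingale lemma then forces $\{V^k\}$ to converge, whence $\{\theta^k,\nu^k\}$ is bounded. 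This closes the loop by bounding the normalizers $\max\{1,\|T_j^k\|\}$ and $\max\{1,\|P_j^k\|\}$ and making $u^k$ comparable to $\zeta^k$ times the true gap.

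Finally, convergence of $V^k$ gives $\sum_k u^k<\infty$, so with $\sum_k\zeta^k=\infty$ the saddle gap has zero liminf; extracting a convergent subsequence yields a limit $\bar z$ that is itself a saddle point of $L$. By Lemma \ref{lem_equivalence} and Lemma \ref{lem_saddle}, its primal block satisfies consensus $\bar\theta_1=\cdots=\bar\theta_m$, feasibility $g_j(\bar\theta_j)=0$, and optimality, so $\bar\theta_j$ is a common point of $\Theta^*$. To promote this to convergence of the entire sequence, I would repeat the $V^k$ analysis with $z^*=\bar z$: the same lemma shows $\|z^k-\bar z\|^2$ converges, and since it tends to $0$ along the subsequence, the whole sequence converges to $\bar z$, giving $\theta_j^k\to\bar\theta\in\Theta^*$ for every $j$.
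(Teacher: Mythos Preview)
Your overall architecture---track $\|z^k-z^*\|^2$ for a saddle point $z^*$, show the cross term is nonnegative, invoke a deterministic supermartingale lemma, then upgrade a subsequential limit to full convergence by rerunning the argument with a new saddle point---is exactly the paper's. The substantive difference is in how the cross term is bounded. You use the generic convex--concave inequality to obtain $(z^k-z^*)'w^k\ge L(\theta^k,\nu^*)-L(\theta^*,\nu^k)$; the paper instead expands $(z^k-z^*)'w^k$ directly into four sums and shows the sharper bound $(z^k-z^*)'w^k\ge L(\theta^k,\nu^*)-L(\theta^*,\nu^*)+h_\rho(\theta^k)$. The extra $h_\rho(\theta^k)$ is what the paper then uses: from $\liminf h_\rho(\theta^k)=0$ it reads off consensus $b_i^k\to 0$ and feasibility $g_i(\theta_i^k)\to 0$ along a subsequence, after which optimality of the limit is immediate. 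Your route can recover this too, since $L(\theta^k,\nu^*)-L(\theta^*,\nu^*)=\bigl[L_0(\theta^k,\nu^*)-L_0(\theta^*,\nu^*)\bigr]+h_\rho(\theta^k)$ with both bracketed terms nonnegative (the first because $\nu^*$ is also a saddle multiplier for the \emph{non}-augmented Lagrangian $L_0$), but you do not say this, and your assertion that the subsequential limit $\bar z$ ``is itself a saddle point of $L$'' is not justified as written: vanishing of $L(\theta^k,\nu^*)-L(\theta^*,\nu^k)$ only tells you $\bar\theta$ minimizes $L(\cdot,\nu^*)$, not that $(\bar\theta,\bar\nu)$ is a saddle point.

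The more serious gap is your handling of the stepsize mismatch. Writing $\alpha_j^k=\zeta^k-(\zeta^k-\alpha_j^k)$ produces correction terms of the form $(\zeta^k-\alpha_j^k)(T_j^k)'(\theta_j^k-\theta_j^*)$, and even with bounded iterates and subgradients these are $O(\zeta^k)$, not $O((\zeta^k)^2)$; they cannot be placed in a summable $w^k$, and they need not be dominated by the nonnegative saddle-gap term either, so the recursion $V^{k+1}\le V^k-u^k+w^k$ with $u^k\ge 0$ and $\sum w^k<\infty$ does not follow from your argument. The paper sidesteps your decomposition entirely: it keeps the normalizers inside the cross term, first proves unnormalized nonnegativity $(z^k-z^*)'w^k\ge 0$ via the four-sum computation, uses this to get convergence of $\|z^k-z^*\|$ and hence a uniform bound $\|w^k\|\le\bar w$, and only then compares the normalized sum to $(1/\bar w)\sum\zeta^k(z^k-z^*)'w^k$. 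If you want to salvage your approach, you should imitate this two-stage argument rather than trying to make the corrections summable.
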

\begin{proof}
Let $(\theta^*,\lambda^*, \gamma^*)$ be an arbitrary   saddle point in Lemma \ref{lem_saddle}. Then, it follows from (\ref{subgradient}) that
\beq
&&\hspace{-.5cm}\twon{\theta_j^{k+1}-\theta_j^*}^2= \twon{\theta_j^{k}-\theta_j^*}^2+r^2(\zeta^k)^2 -2\zeta^k(\theta_j^k-\theta_j^*)'T_j^k\nonumber
\enq
Similarly, one can easily obtain $$\twon{\nu_j^{k+1}-\nu_j^*}^2\le \twon{\nu_j^{k}-\nu_j^*}^2+r^2(\zeta^k)^2-2\zeta^k (\nu_j^k-\nu_j^*)'P_j^k.$$ For notational simplicity, let
\bee
z^k=\begin{bmatrix}\theta^k \\ \nu^k\end{bmatrix}, z^*=\begin{bmatrix} \theta^* \\ \nu^*\end{bmatrix}, ~\text{and}~ w^k=\begin{bmatrix}T^k \\ P^k\end{bmatrix}. \nonumber
\ene

Then, summing all $j\in\cV$ leads to that
\beq\label{iteration}
\twon{z^{k+1}-z^*}^2&\le &\twon{z^{k}-z^*}^2+2r^2(\zeta^k)^2\nonumber\\
&& -2 \zeta^k (z^k-z^*)'w^k.
\enq

The rest of the proof is completed by establishing the following two claims.

{\em Claim 1:}  $(z^k-z^*)'w^k \ge 0$ for all $k\ge 1$.

To show the non-negativeness, we write
\beq
&&\hspace{-.7cm}(z^k-z^*)'w^k=\sum_{j=1}^m\big((c+\sum_{i=1}^m l_{ij}\tilde{\lambda}_i^k+s_j'\tilde{\gamma}_j^k )'(\theta_j^k-\theta_j^*)\big)\nonumber\\
&&-(b_j^k)'(\lambda_j^k-\lambda_j^*)-(g_j(\theta_j^k))'(\gamma_j^k-\gamma_j^*)\big), \label{nonnegativeness}
\enq
where $\tilde{\gamma}_j^k=\gamma_j^k+\rho g_j(\theta_j^k)$ is a modified Lagrange multiplier.

Noting that $g_j(\theta_j^*)=0$ and $b_i^*=0$, the sum in (\ref{nonnegativeness}) is split into four sums. The first sum is the difference between two non-penalized  Lagrangians, i.e.,
\bee
\sum_{j=1}^m\left(c'\theta_j^k+(\lambda_j^*)'b_j^k+(\gamma_j^*)'g_j(\theta_j^k) -c'\theta_j^* \right).\nonumber
\ene
The second sum involves the Lagrange multiplier $\lambda^k$, i.e.,
\beq
&&\sum_{j=1}^m\left(\sum_{i=1}^m l_{ij}(\lambda_i^k)'(\theta_j^k-\theta_j^*)-(\lambda_j^k)'b_j^k\right)\nonumber\\
&&=\sum_{i=1}^m(\lambda_i^k)'\sum_{j=1}^m l_{ij}(\theta_j^k-\theta_j^*)-\sum_{j=1}^m(\lambda_j^k)'b_j^k\nonumber\\
&&=\sum_{i=1}^m(\lambda_i^k)'(b_i^k-b_i^*)-\sum_{j=1}^m(\lambda_j^k)'b_j^k\nonumber\\
&&=0 \nonumber
\enq
where we have used the fact that $b_i^*=0$ for all $i\in\cV$.
The third sum involves the Lagrange multiplier $\gamma^k$, i.e.,
\beq
&&\sum_{j=1}^m\big(\gamma_j^k\big)' \left(s_j^k(\theta_j^k-\theta_j^*)-g_j(\theta_j^k)\right) \nonumber \\
&&\ge\sum_{j=1}^m\big(\gamma_j^k\big)'(g_j(\theta_j^k)-g_j(\theta_j^*)-g_j(\theta_j^k))\nonumber\\
&&=0 \nonumber
\enq
where the inequality follows from the fact that $\gamma_j^k\succeq 0$, $g_j(\theta_j^*)=0$ and $s_j^k$ is a sub-gradient of the vector function $g_j(\theta_j)$ at $\theta_j^k$. The fourth sum involves the penalty term, i.e.,
\beq
&&\rho \sum_{j=1}^m \left (\sum_{i=1}^m l_{ij} b_i^k +(s_j^k)'g_j(\theta_j^k) \right)'(\theta_j^k-\theta_j^*) \nonumber\\
&&=\rho\sum_{i=1}^m(b_i^k)'(b_i^k-b_i^*)+g_i(\theta_i^k)'s_i^k(\theta_i^k-\theta_i^*)\nonumber\\
&&\ge \rho \sum_{i=1}^m\big(\twon{b_i^k}^2+\twon{g_i(\theta_i^k)}^2\big)=2h_\rho(\theta^k),\nonumber
\enq
where the inequality follows from $b_i^*=0, g_i(\theta_i^*)=0$ for all $i\in\cV$ and the non-negativeness of $g_i(\theta_i^k)$, together with the fact that $s_i^k$ is a sub-gradient of the vector function $g_i(\theta)$ at $\theta_i^k$. Summing the above four sums, we finally obtain that
\bee
(z^k-z^*)'w^k\ge L(\theta^k,\lambda^*,\gamma^*)- L(\theta^*,\lambda^*,\gamma^*)+h_\rho(\theta^k), \label{nonnegative}
\ene which is non-negative by Lemma \ref{lem_saddle}.

{\em Claim 2:} $\lim_{k\rainfty}\theta_j^k=\lim_{k\rainfty}\theta_i^k\in\Theta^*$ for all $i,j\in\cV$.

To this end, jointly with Proposition A.4.4 in \cite{bertsekas2015convex}, (\ref{stepsize}) and (\ref{iteration}), it follows from  Claim 1  that the sequence $\{\twon{z^k-z^*}\}$ is convergent.  Then, $\twon{z^k}$ is uniformly bounded. This further implies that the subgradient  $\twon{w^k}$ is uniformly bounded, e.g., $\twon{w^k}\le \bar{w}<\infty$ for all $k>0$.
By Claim 1 and Proposition A.4.4 in \cite{bertsekas2015convex}, it follows from (\ref{iteration}) that
\bee
 \sum_{k=1}^\infty\zeta^k (z^k-z^*)'w^k<\infty. \nonumber
\ene
Together with (\ref{stepsize}), we obtain that
\bee
\liminf_{k\rainfty} (z^k-z^*)'w^k=0. \nonumber
\ene

In view of (\ref{nonnegative}), it follows that
$
\liminf_{k\rainfty} L(\theta^k,\lambda^*,\gamma^*)= L(\theta^*,\lambda^*,\gamma^*)$ and $\liminf_{k\rainfty}  h_\rho(\theta^k)=0$. Jointly with (\ref{fun_lag}), we finally obtain that $\liminf_{k\rainfty}\sum_{i=1}^m c'\theta_i^k=\liminf_{k\rainfty}({\bf 1} \otimes c)'\theta^*$ and  $\liminf_{k\rainfty}\theta_i^k=\liminf_{k\rainfty}\theta_j^k$ for all $i,j\in\cV$. That is, there exists an optimal point $\theta_0^*\in\Theta^*$ such that $\liminf_{k\rainfty}\theta_i^k=\theta_0^*$ for all $i\in\cV$.  Moreover, one can easily verify that $({\bf 1}\otimes \theta_0^*, \lambda^*,\gamma^*)$ is also a saddle point of Lemma \ref{lem_saddle}. Together with Claim 1, it holds that $\{\twon{\theta_i^k-\theta_0^*}\}$ converges. Hence, $\lim_{k\rainfty}\theta_i^k=\theta_0^*\in\Theta^*$ for all $i\in\cV$.
\end{proof}
\begin{cor}[Error bounds] Under the conditions of Theorem \ref{thm_convergence},  let
$\check{\theta}^k= {\sum_{t=1}^k\zeta^t\theta^t}/{t^k} ~\text{where}~t^k=\sum_{t=1}^k\zeta^t$. Then, 
\beq
&& L(\check{\theta}^k,\lambda^*,\gamma^*)-L({\theta}^*,\lambda^*,\gamma^*) + h_\rho(\check{\theta}^k) \nonumber\\
&&~~~~\le(\twon{z^1-z^*}^2+2r^2\sum_{t=1}^k(\zeta^t)^2)/(2t^k).
\enq
\begin{proof}It is straightforward by combining (\ref{iteration}) and (\ref{nonnegative}). 
\end{proof}
\end{cor}
\subsection{Comparisons with the State-of-the-art}\label{rem_comparsion} To solve the SP in (\ref{rcp_rand}), a distributed setup is proposed in \cite{carlone2014distributed} by exchanging the active constraints with neighbors. Specifically, each node $j$ solves a {\em local} SP of the form
 \beq
&&\min_{\theta\in\Theta} c'\theta~~\text{subject to}~\nonumber\\
&& \hspace{0.5cm} f(\theta,q^{(i)})\le 0, i\in S_j^k \subseteq \{1,\ldots, N_{bin}\}\label{subrcp_rand}
\enq
at each iteration where $S_j^0$ is the set of indices associated with the random samples generated in node $j$, and obtains local active constraints, indexed as $ActS_j^k:=\{i\in S_j^k|f((\theta_j^k)^*,q^{(i)})=0\}$. Here $(\theta_j^k)^*$ is an optimal solution to the local SP in (\ref{subrcp_rand}), after which it broadcasts its active constraints indexed by $ActS_j^k$ to its out-neighbors. Subsequently, node $j$ updates its local constraint indices as
\bee
S_j^{k+1}=ActS_j^k\cup(\cup_{i\in\cN_j^{in}}ActS_i^k)\cup S_j^{0} \label{constraint}
\ene
and returns a local SP of the form (\ref{subrcp_rand}) replacing $S_j^k$ by $S_j^{k+1}$. In comparison, one can easily identify several key differences from Algorithm \ref{alg_dist}.
\begin{enumerate}
\renewcommand{\labelenumi}{\rm(\alph{enumi})}
\item Using (\ref{subrcp_rand}), we cannot guarantee to reduce the computation cost in each node. In particular, it follows from (\ref{constraint}) that the number of constraints in each local SP in (\ref{subrcp_rand}) increases with respect to the number of iterations, and eventually is greater than the total number of active constraints in the SP in (\ref{rcp_rand}). In an extreme case, the number of active constraints of (\ref{rcp_rand}) can be up to $N_{bin}$. From this point of view, the computation per iteration in each node is still very demanding. It should be noted that selecting the active constraints of an optimization problem is almost as difficult as solving the entire optimization problem.

In Algorithm \ref{alg_dist}, it is clear that the computation only requires a few additions and multiplications of vectors, in addition to finding a sub-gradient of a parameterized function $f(\theta,q)$ in $\theta$. It should be noted that  the computation of the sub-gradient of $f(\theta,q)$ is unavoidable in almost any optimization algorithm. Clearly, the dimension of $\gamma_j$ is  $n_j+1$ and $n_j \approx N_{bin}/m$. This implies that the computation cost in each node is greatly reduced as the number of nodes $m$ increases.

\item Deciding the active constraints in (\ref{subrcp_rand}) is very sensitive to the optimal solution $(\theta_j^k)^*$. If $(\theta_j^k)^*$ is not an exact optimal solution or the evaluation of $f((\theta_j^k)^*,q^{(i)})$ is not exact, we cannot correctly identify the  index set $ActS_j^k$ of active constraints. In Algorithm \ref{alg_dist}, there is no such a problem and the local update has certain robustness properties with respect to the round-off errors in computing $b_j^k$ and $g_j(\theta_j^k)$.

\item The size of data exchange between nodes in (\ref{subrcp_rand}) may grow monotonically.  Although a quantized index version of (\ref{subrcp_rand}) is proposed for the channel with bounded communication bandwidth, it needs to compute the vertices of a convex hull per iteration. More importantly, the dimension of the exchanged data per iteration is still larger than that in Algorithm \ref{alg_dist}.

\item The local SP of the form (\ref{subrcp_rand}) in each node contains several overlapping constraints. Specifically, each constraint  set $\{\theta|f(\theta,q^{(i)})\le 0\}$ could be handled more than once by every node. This certainly induces redundancy in computation. In Algorithm \ref{alg_dist}, each inequality is handled exclusively in only one node. From this perspective, Algorithm \ref{alg_dist} is of great importance for a node with very limited computational and memory capability.
\item It is impossible to describe how the error bounds are reduced with respect to the number of iterations for the distributed algorithms \cite{carlone2014distributed}. 
\end{enumerate}

The primal-dual sub-gradient methods for distributed constrained optimization have been previously used, see e.g., \cite{chang2014distributed}. However, the proposed algorithm originated from the normal Lagrangian (i.e., $\rho=0$ in (\ref{fun_lag})). As discussed in \cite{chang2014distributed} after Theorem 1, this usually requires the strict convexity of the Lagrangian to ensure convergence of the primal-dual sequence, which clearly is not satisfied in our case. To remove this strong convexity condition, the authors propose a specially perturbed sub-gradient and assume boundedness on $\Theta$ and $\partial_\theta f(\theta,q^{(i)})$. This increases the complexity of the distributed algorithm. In particular, it requires to run up to three consensus algorithms and projects the dual variable onto a bounded ball whose radius must be initially decided, and it is a global parameter. Obviously, Algorithm \ref{alg_dist} has a much simpler structure by adopting an augmented Lagrangian in (\ref{fun_lag}), which, to some extent, can be interpreted  as the strict convexification of the Lagrangian function. Moreover, the convergence proof of Algorithm \ref{alg_dist}, which is given in the next subsection,  is simpler and easier to understand.

Compared with the distributed alternating direction method of multipliers (ADMM) \cite{shi2014linear,chang2015multi,iutzeler2016explicit},  the computation of Algorithm \ref{alg_dist} is  simpler. For example, the ADMM essentially updates the primal sequence as follows
\bee
\theta^{k+1}\in\argmin_{\theta\in\Theta^m}  L_c({\theta},\lambda^k,\gamma^k) \label{admm}
\ene
where $L_c({\theta},\lambda^k,\gamma^k)$ has a similar form to the augmented Lagrangian $L({\theta},\lambda^k,\gamma^k)$ in (\ref{fun_lag}).  That is, it requires to solve an optimization (\ref{admm}) per iteration. In Algorithm \ref{alg_dist},  we only need to compute one inner iteration to update $\theta^k$ by moving along the sub-gradient direction.
\subsection{Extensions to Stochastically time-varying graphs}
\label{sec_est}
Algorithms \ref{alg_dist} can be easily generalized to the case of stochastically time-varying graphs with a fixed number of nodes. In particular, let the interaction graph at time $k$ be $\cG^k:=\{\cV,\cE^k\}$. If $\{\cG^k\}$ is an i.i.d. process where the mean graph $\bE[\cG^k]$ is strongly connected, Theorem \ref{thm_convergence}
continues to hold by following similar lines of proof. For instance, it is easy to show that the SP in (\ref{rcp_rand}) is equivalent to
\beq
\min_{\theta_1,\ldots,\theta_m \in\Theta}&& \hspace{-0.3cm} \sum_{j=1}^m c'\theta_j~\text{subject to}~\label{rcp_rande}\\
&& \hspace{-1.5cm} \sum_{i=1}^m\bE[a_{ji}^k](\theta_j-\theta_i)=0,   f(\theta_j,q^{(j)})\preceq 0, \forall j\in\cV.\nonumber
\enq

Next, consider a stochastically time-varying augmented Lagrangian
\beq\label{fun_lag1}
&&  L^k({\theta},\lambda,\gamma)=\sum_{j=1}^mL_j^k(\theta,\lambda_j,\gamma_j),
\enq
where $L_j^k$ is obtained by replacing $\cL_j$ with $\cL_j^k$ in (\ref{fun_lag}). Moreover, all the elements $a_{ij}$ in Algorithm \ref{alg_dist} are replaced by $a_{ij}^k$. Using the  theory of stochastic approximation \cite{kushner2003stochastic}, we can find a saddle point of  $\bE[L^k]$, i.e.,  for any $ ({\theta},\gamma,\lambda)$, the inequalities
\beq
\bE[L({\theta}^*,\lambda,\gamma)] \le\bE[L({\theta}^*,\lambda^*,\gamma^*)] \le \bE[L({\theta},\lambda^*,\gamma^*)]\nonumber
\enq
hold almost surely. Following a similar reasoning, we can establish the following result, the proof of which is omitted due to the page limitation. 
\begin{thm}[Almost sure convergence] Let Assumption \ref{ass_interior} hold and let $\{\cG^k\}$ be an i.i.d. sequence with $\bE[\cG^k]$ strongly connected. If there exists a positive $r$ such that $\max\{\twon{T_j^k},\twon{P_j^k}\}\le r$, the sequence  $\{\theta_j^k\}$ of Algorithm \ref{alg_dist} with stepsizes in (\ref{stepsize}) and $a_{ij}$ replaced by $a_{ij}^k$ converges almost surely to some common random point in the set $\Theta^*$ of the optimal solutions to (\ref{rcp_rand}).
\end{thm}

\section{Distributed Random Projected Algorithms for Directed Graphs}
\label{sec_nrpa}
In this section, we are concerned with the design of a distributed algorithm for directed graphs.  Different from undirected graphs, the information flow between nodes is unidirectional, which results in information unbalance of the network, and renders the primal-dual algorithm inapplicable. To overcome it, we design a consensus algorithm to gather information from in-neighbors and obtain an intermediate state vector. The feasibility is then asymptotically ensured by driving the intermediate state vector toward the local constraint set, which is achieved by updating the solution toward the sub-gradient direction of a randomly selected constraint function.  This process is realized by designing a novel {\em distributed} variation of a Polyak random algorithm \cite{nedic2011random}, see further comments in Remark \ref{remarkpolyak}.
The main result is then to prove almost sure convergence of an optimal solution.

\subsection{Distributed Random Projected Algorithm}
In Fig. \ref{fig_infor}, it is clear that the information exchange is bidirectional. In particular, Algorithm \ref{alg_dist} requires each node $j$ to use the {\em modified} Lagrangian multipliers $\tilde{\lambda}_i$ from its out-neighbors to update the decision vector $\theta_j$. Obviously, this is not implementable for directed graphs, and in this case there is no clear way to design a distributed primal-dual algorithm. For this purpose,  we propose a two-stage  distributed random projected algorithm
\beq
v_j^k&=&\sum_{i=1}^ma_{ji}\theta_i^k-\zeta^k\cdot c,\label{dist_mix}\\
\theta_j^{k+1}&=&\Pi_{\Theta}(v_j^k-\beta \cdot \frac{f(v_j^k,q^{(jw_j^k)})_+}{\twon{d_j^k}^2}d_j^k), \label{dist_digraph}
\enq
where $\zeta^k>0$ is the (deterministic) stepsize given in (\ref{stepsize}) $\beta\in(0,2)$ is a constant parameter, $w_j^k\in\{1,\ldots,n_j\}$ is a random variable and the vector $d_j^k\in\partial f(v_j^k,q^{(jw_j^k)})_+$ if $f(v_j^k,q^{(jw_j^k)})_+>0$ and $d_j^k=d_j$ for some $d_j\neq 0$ if $f(v_j^k,q^{(jw_j^k)})_+=0$.

We intuitively explain the key ideas of the above algorithm. The objective of (\ref{dist_mix}) is to distributedly solve an unconstrained optimization, i.e., the optimization by removing the constraints in (\ref{const_ineq}), see \cite{nedic2009distributed} for details. Note that in
\cite{nedic2009distributed} the double stochasticity of $A$ is required, which is in fact not necessary in our paper. The purpose of (\ref{dist_digraph}) is to drive the intermediate state $v_j^k$ toward a randomly selected local constraint set $\Theta\cap\Theta_j^{w_j^k}$, where $\Theta_j^{w_j^k}:=\{\theta|f(\theta,q^{(jw_j^k)})\le 0\}$. If $\beta$ is sufficiently small, it is easy to verify (see e.g. \cite[Proposition 6.3.1]{bertsekas1999nonlinear}) that $$d(\theta_j^{k+1},\Theta\cap\Theta_j^{w_j^k})\le d(v_j^k,\Theta\cap\Theta_j^{w_j^k}).$$ That is, $\theta_j^{k+1}$ is closer to the local constraint set $\Theta\cap\Theta_j^{w_j^k}$ than $v_j^k$. If $w_j^k$ is uniformly selected at random from $\{1,\ldots,n_j\}$, we conclude that $\theta_j^{k+1}$ is closer to the local constraint set $\Theta\cap\Theta_j$ than $v_j^k$ in the average sense.  Once the consensus is achieved among nodes, the state vector $\theta_j^k$ in each node asymptotically converges to a point in the feasible set $\Theta_0$.

\begin{rem}
\label{remarkpolyak}
The proposed algorithm is motivated by a generalized Polyak random algorithm \cite{nedic2011random}, which however does not address the distributed design. In this paper, we adapt this algorithm to a directed graph with multiple interconnected nodes and establish its asymptotic optimality for strongly connected digraphs. To the best of our knowledge, the existing work on distributed optimization mostly require the underlying graph to be {\em balanced} of the form that the weighting matrix $A$ is doubly stochastic, see e.g. \cite{nedic2009distributed,duchi2012dual,gharesifard2014distributed,lee2016asynchronous}. Clearly, assuming that the directed graph is balanced is a quite restrictive assumption on the network topology, which is in fact not necessary.
This issue has been recently resolved either by combining the gradient descent and the push-sum consensus \cite{nedic2015distributed}, or augmenting an additional variable for each agent to record the state updates  \cite{xi2015directed}. In comparison, the algorithm in \cite{nedic2015distributed} only focuses on the {\em unconstrained} optimization, involves nonlinear iterations and requires the updates of four vectors. The algorithm in \cite{xi2015directed} requires an additional ``surplus" vector to record the state update, which increases the computation and communication cost. From this viewpoint, the proposed algorithm of this paper has a simpler structure and is easier to implement, see Algorithm~\ref{alg_dist1} for details.
\end{rem}
\subsection{Convergence of Algorithm \ref{alg_dist1}}
To prove convergence, we need the following assumptions, most of which are standard in sub-gradient methods.

\begin{algorithm}[t!]
\caption{Distributed random projection algorithm for the SP with directed graphs}
\label{alg_dist1}
\begin{enumerate}
\renewcommand{\labelenumi}{\theenumi:}
\item {\bf Initialization:} For each node $j\in\cV$ set
$\theta_j=0$.
\item {\bf Repeat}
\item {\bf Local information exchange:} Every node $j\in\cV$ broadcasts $\theta_j$ to its out-neighboring nodes.
\item {\bf Local variables update}: Every node $j\in\cV$ receives the state vector $\theta_i$ from its in-neighbor $i\in\cN_j^{in}$ and updates it as follows
\begin{itemize}
\item $v_j =\sum_{i\in\cN_j^{in}}a_{ji}\theta_i-\zeta c$ where the stepsize $\zeta$ is given in (\ref{stepsize}).
\item Draw $w_j\in \{1,\ldots,n_j\}$ uniformly at random.
\item $\theta_j\leftarrow\Pi_{\Theta}(v_j-\beta \cdot \frac{f(v_j,q^{(jw_j)})_+}{\twon{d_j}^2}d_j)$ where $d_j$ is defined in (\ref{dist_digraph}).
\end{itemize}
\item {\bf Set} $k=k+1$.
\item {\bf Until} a predefined stopping rule (e.g., a maximum iteration number) is satisfied.
 \end{enumerate}
\end{algorithm}

\begin{assum}[Randomization and sub-gradient boundedness] \label{ass_random} Let the following hold:
\begin{itemize}
\item[(a)] $\{w_j^k\}$ is an i.i.d. sequence that is uniformly distributed over the set $\{1,\ldots,n_j\}$ for any $j\in\cV$, and is independent over the index $j$.
\item [(b)] The sub-gradients $d_j^k$ are uniformly bounded over the set $\Theta$, i.e., there exists a scalar $r$ such that
$$\twon{d_j^k}\le r, \forall j\in\cV.$$
\end{itemize}
\end{assum}

Clearly, the designer is free to choose any distribution for drawing the samples $w_j^k$. Thus, Assumption \ref{ass_random}(a) is easy to satisfy. By the property of the sub-gradient and (\ref{dist_digraph}), a sufficient condition for Assumption \ref{ass_random}(b) is that $\Theta$ is bounded.  

We now  present the convergence result on the distributed random algorithm.
\begin{thm}[Almost sure convergence]\label{thm_drp}
Suppose that Assumptions \ref{ass_interior}-\ref{ass_random} hold. The sequence $\{\theta_j^k\}$ of Algorithm \ref{alg_dist1} converges almost surely to some common point in the set $\Theta^*$ of the optimal solutions to (\ref{rcp_rand}).
\end{thm}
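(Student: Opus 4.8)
The plan is to split the claim into three parts---\emph{feasibility} (limit points satisfy all local constraints), \emph{consensus} (the disagreement $\twon{\theta_j^k-\theta_i^k}$ vanishes), and \emph{optimality} (the common limit minimizes $c'\theta$ over $\Theta_0$)---and to obtain the almost-sure statements from the Robbins--Siegmund almost-supermartingale convergence theorem. The indispensable preliminary is to replace the unavailable arithmetic mean by the \emph{$\pi$-weighted} mean forced by the directed topology: since $\cG$ is strongly connected and $A$ is row-stochastic (with the customary aperiodicity $a_{jj}>0$), $A$ is primitive, so Perron--Frobenius furnishes a unique left eigenvector $\pi=[\pi_1,\ldots,\pi_m]'\succ 0$ with $\pi'A=\pi'$ and $\bone'\pi=1$. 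Writing $\bar\theta^k=\sum_{j=1}^m\pi_j\theta_j^k$ and using $\sum_j\pi_j a_{ji}=\pi_i$, the mixing step averages \emph{exactly} to $\sum_j\pi_j v_j^k=\bar\theta^k-\zeta^k c$, a clean descent step on the objective; this identity, valid regardless of consensus, is why $\pi$ (not $\bone/m$) is the correct weighting for directed graphs.

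The core is the per-iterate inequality for the random projection. Fix any optimal $\theta^*\in\Theta^*$; it is feasible for every sampled constraint, so the Polyak projection property and non-expansiveness of $\Pi_{\Theta}$ give, for the index $w_j^k$ actually drawn,
\[
\twon{\theta_j^{k+1}-\theta^*}^2\le\twon{v_j^k-\theta^*}^2-\beta(2-\beta)\frac{f(v_j^k,q^{(jw_j^k)})_+^2}{\twon{d_j^k}^2}.
\]
Taking the conditional expectation over the uniform draw converts the single residual into the whole local residual $\tfrac{1}{n_j}\sum_{w=1}^{n_j}f(v_j^k,q^{(jw)})_+^2$, bounded below using $\twon{d_j^k}\le\bar{d}$; this is precisely where Assumption~\ref{ass_random} ties one projection per step to feasibility for \emph{all} $n_j$ constraints of node $j$. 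Expanding $\twon{v_j^k-\theta^*}^2$ by row-stochastic convexity and summing against the weights $\pi_j$ forms the Lyapunov function $\Phi^k=\sum_j\pi_j\twon{\theta_j^k-\theta^*}^2$; the $\pi$-weighting makes the mixing term telescope back to $\Phi^k$, leaving a relation of the form $\bE[\Phi^{k+1}\mid\cF^k]\le\Phi^k-2\zeta^k c'(\bar\theta^k-\theta^*)-(\text{nonnegative feasibility term})+O((\zeta^k)^2)$. Robbins--Siegmund then yields, almost surely, that $\Phi^k$ converges (for each fixed $\theta^*$), that the iterates are bounded, and that the feasibility terms are summable---so every limit point lies in $\Theta_0$.

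Next I would establish consensus. The disagreement dynamics is the row-stochastic recursion driven by two inputs: the common gradient term $\zeta^k c$, square-summable by (\ref{stepsize}), and the projection displacements $\twon{\theta_j^{k+1}-v_j^k}$, which are controlled by $f(v_j^k,q^{(jw_j^k)})_+/\twon{d_j^k}$ together with $\zeta^k$ and hence square-summable on the almost-sure event just obtained. Because $A^k\rightarrow\bone\pi'$ geometrically, these square-summable inputs give $\sum_k\twon{\theta_j^k-\bar\theta^k}^2<\infty$ and $\twon{\theta_j^k-\bar\theta^k}\rightarrow 0$ almost surely, so all nodes asymptotically track the single trajectory $\bar\theta^k$.

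The main obstacle is the final optimality step, because the objective term $-2\zeta^k c'(\bar\theta^k-\theta^*)$ in the supermartingale is \emph{sign-indefinite}: an infeasible $\bar\theta^k$ may momentarily have objective below $c'\theta^*$. I would resolve this by combining the already-proved feasibility and consensus with $\sum_k\zeta^k=\infty$: a convergent $\Phi^k$ cannot coexist with a persistently negative drift summed against a divergent $\sum_k\zeta^k$, which forces $c'\bar\theta^k\rightarrow c'\theta^*$, while asymptotic feasibility prevents $\bar\theta^k$ from settling at any strictly sub-optimal value. Since $\Phi^k$ converges for \emph{every} $\theta^*\in\Theta^*$ and $\twon{\theta_j^k-\bar\theta^k}\rightarrow 0$, the common limit is feasible, attains the optimal value, and is therefore a single point of $\Theta^*$ shared by all nodes. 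The delicate accounting---keeping the moving center $\bar\theta^k$, the fixed reference $\theta^*$, and the stochastic per-step projection inside one almost-supermartingale while the sign-indefinite objective term is controlled only in the limit---is the part the proof must get exactly right.
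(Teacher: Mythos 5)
Your overall architecture---the Perron left eigenvector $\pi$, the $\pi$-weighted Lyapunov function $\Phi^k=\sum_j\pi_j\twon{\theta_j^k-\theta^*}^2$, the Polyak projection inequality averaged over the uniform draw, Robbins--Siegmund, and consensus via $A^k\ra{\bf 1}\pi'$---coincides with the paper's (Lemmas \ref{lem_sigmafield}--\ref{lem_mixing} and Theorem \ref{thm_martingale}). But there is a genuine gap at exactly the point you flag: the sign-indefinite term $-2\zeta^k c'(\bar\theta^k-\theta^*)$. With that term present, Robbins--Siegmund does not apply at all, since the theorem needs the recursion in the form $(1+v^k)\Phi^k-z^k+w^k$ with $z^k\ge 0$ and $\sum_k w^k<\infty$: whenever $c'(\bar\theta^k-\theta^*)<0$ the term is a positive perturbation weighted by the non-summable $\zeta^k$, and it cannot be absorbed into $v^k$ either (bounding it by $\zeta^k(\twon{c}^2+C\,\Phi^k)$ would require $\sum_k\zeta^k<\infty$). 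So the conclusions of your second paragraph---convergence of $\Phi^k$, boundedness, summability of the feasibility terms---are not yet justified, and the repair in your last paragraph is circular: it invokes the convergence of $\Phi^k$ to control the very term whose control is needed to establish that convergence.

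The paper's fix, which your proposal is missing, is to evaluate the objective gap not at the possibly infeasible point $\hat\theta_j^k$ (or $\bar\theta^k$) but at its projection $y_j^k=\Pi_{\Theta_0}(\hat\theta_j^k)$ onto the feasible set $\Theta_0$ in (\ref{set_feasible}). Writing $c'(\hat\theta_j^k-\theta^*)=c'(\hat\theta_j^k-y_j^k)+c'(y_j^k-\theta^*)$ and absorbing the cross term into $O((\zeta^k)^2)-O(\twon{\hat\theta_j^{k}-y_j^k}^2)$ (this is Lemma \ref{lem_sigmafield}, imported from Proposition 1 of \cite{nedic2011random}; it is also where the averaged constraint residual must dominate $\twon{\hat\theta_j^{k}-y_j^k}^2$, a regularity condition your sketch does not address) yields a drift term $-2\zeta^k c'(y_j^k-\theta^*)$ that is genuinely nonpositive, because $y_j^k\in\Theta_0$ and $\theta^*$ minimizes $c'\theta$ over $\Theta_0$. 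Robbins--Siegmund then delivers simultaneously the almost-sure convergence of $\Phi^k$, the bound $\sum_k\zeta^k c'(\bar y^k-\theta^*)<\infty$ with a nonnegative summand (whence $\liminf_k c'\bar y^k=c'\theta^*$ via $\sum_k\zeta^k=\infty$), and $\sum_k\sum_j\pi_j\twon{\hat\theta_j^{k}-y_j^k}^2<\infty$, which gives asymptotic feasibility through Lemma \ref{lem_thetay}. Without introducing $y_j^k$ your argument cannot close.
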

\subsection{Proof of Theorem \ref{thm_drp}}

The proof is roughly divided into three parts. The first part establishes a stochastically ``decreasing" result, see Lemma \ref{lem_sigmafield}. That is, the distance of $\theta^{k+1}$ to some optimal point $\theta^*$ is ``stochastically"
closer than that of $\theta^k$. The second part essentially shows the asymptotic feasibility of the state vector $\theta_j^k$, see Lemma \ref{lem_thetay}. Finally, the last part establishes an asymptotic consensus  result in Lemma \ref{lem_mixing}, which shows that the sequence $\{\theta_j^k\}$ converge to some common value for all $j\in\cV$. Combining these results, we show that $\{\theta_j^k\}$ converges almost surely to some common random point in the set $\Theta^*$.

Now, we establish a stochastically ``decreasing" result.
\begin{lem}[Stochastically decreasing] \label{lem_sigmafield}
Let $\cF^k$ be the sigma-field generated by the random variables $\{w_j^t, j\in\cV\}$ up to time $k$, i.e.,
\bee\cF^k=\{w^0,\ldots,w^k\}\label{sigmafield}
\ene and $\hat{\theta}_j^{k}=\sum_{i=1}^ma_{ji}\theta_i^k$, where $\theta_j^k$ is generated in  Algorithm \ref{alg_dist1}. 

Under Assumptions \ref{ass_interior} and \ref{ass_random}, it holds almost surely that for all $j\in\cV$ and $k \ge \tilde{k}$,  which is a sufficiently large number,
\beq
&&\hspace{-.8cm}\bE[\twon{\theta_j^{k+1}-\theta^*}^2|\cF_{k}]\le \left(1+r_1(\zeta^k)^2\right)\twon{\hat{\theta}_j^{k}-\theta^*}^2 \label{key_inequality} \\
&&\hspace{.4cm} -2\zeta^kc'(y_j^k-\theta^*) - r_2(\twon{\hat{\theta}_j^{k}-y_j^k}^2)+r_3(\zeta^k)^2,  \nonumber
\enq
where $r_i>0, i\in\{1, 2, 3\}$, $\theta^*\in\Theta^*$ and $y_j^k=\Pi_{\Theta_0}(\hat{\theta}_j^{k})$ with $\Theta_0$ given in (\ref{set_feasible}).
\end{lem}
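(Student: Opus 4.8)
The plan is to handle the two substeps of Algorithm~\ref{alg_dist1} in sequence, anchoring the analysis at the optimal point $\theta^*$ and at the feasible point $y_j^k$. The basic building block is the descent property of the random projection in (\ref{dist_digraph}): since $\theta^*\in\Theta^*\subseteq\Theta_0\subseteq\Theta\cap\Theta_j^{w_j^k}$ for every realization of $w_j^k$, and $d_j^k\in\partial f(v_j^k,q^{(jw_j^k)})_+$, the standard Polyak estimate (see \cite[Prop.~6.3.1]{bertsekas1999nonlinear}, \cite{nedic2011random}) combined with non-expansiveness of $\Pi_{\Theta}$ yields, for $\beta\in(0,2)$,
\beq
\twon{\theta_j^{k+1}-\theta^*}^2 &\le& \twon{v_j^k-\theta^*}^2 \nonumber\\
&& -\,\beta(2-\beta)\frac{f(v_j^k,q^{(jw_j^k)})_+^2}{\twon{d_j^k}^2}. \nonumber
\enq
First I would substitute $v_j^k=\hat\theta_j^k-\zeta^k c$ and expand $\twon{v_j^k-\theta^*}^2=\twon{\hat\theta_j^k-\theta^*}^2-2\zeta^k c'(\hat\theta_j^k-\theta^*)+(\zeta^k)^2\twon{c}^2$, then route the linear term through the feasible anchor $y_j^k$ as $-2\zeta^k c'(y_j^k-\theta^*)-2\zeta^k c'(\hat\theta_j^k-y_j^k)$; this already produces the objective term $-2\zeta^k c'(y_j^k-\theta^*)$ appearing in (\ref{key_inequality}).

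Next I would take $\bE[\cdot\mid\cF_k]$, under which $\hat\theta_j^k,v_j^k,y_j^k$ are measurable and, by Assumption~\ref{ass_random}(a), $w_j^k$ averages uniformly over $\{1,\ldots,n_j\}$. Using the uniform bound $\twon{d_j^k}\le\bar{d}$ from Assumption~\ref{ass_random}(b), the projection term is bounded below by
$$
\beta(2-\beta)\,\bE\Big[\tfrac{f(v_j^k,q^{(jw_j^k)})_+^2}{\twon{d_j^k}^2}\,\Big|\,\cF_k\Big]\ge \frac{\beta(2-\beta)}{\bar{d}^2\,n_j}\sum_{w=1}^{n_j} f(v_j^k,q^{(jw)})_+^2.
$$
The crucial analytic input is an error bound (linear regularity) implied by the Slater condition in Assumption~\ref{ass_interior}: on a bounded region there is a constant $\kappa>0$ with $\tfrac{1}{n_j}\sum_{w}f(v_j^k,q^{(jw)})_+^2\ge\kappa\,d(v_j^k,\Theta_0)^2$. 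Since $d(\cdot,\Theta_0)$ is $1$-Lipschitz and $\twon{v_j^k-\hat\theta_j^k}=\zeta^k\twon{c}$, we get $d(v_j^k,\Theta_0)^2\ge \twon{\hat\theta_j^k-y_j^k}^2-2\zeta^k\twon{c}\,\twon{\hat\theta_j^k-y_j^k}$, which converts the averaged constraint violation into a strictly negative multiple of $\twon{\hat\theta_j^k-y_j^k}^2$ plus $O(\zeta^k)^2$ remainders.

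It then remains to reconcile the two $\twon{\hat\theta_j^k-y_j^k}$ contributions. The cross term $-2\zeta^k c'(\hat\theta_j^k-y_j^k)$ and the $\zeta^k$-order remainder above are controlled by Young's inequality $2\zeta^k\twon{c}\,\twon{\hat\theta_j^k-y_j^k}\le \eta\twon{\hat\theta_j^k-y_j^k}^2+\eta^{-1}(\zeta^k)^2\twon{c}^2$; choosing $\eta$ a small fraction of $\kappa\beta(2-\beta)/\bar{d}^2$ and taking $k\ge\tilde{k}$ large enough that the $\zeta^k$ coefficients are dominated leaves a strictly negative constant coefficient on $\twon{\hat\theta_j^k-y_j^k}^2$, i.e.\ the term $-O(\twon{\hat\theta_j^k-y_j^k}^2)$ of (\ref{key_inequality}). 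All scalar $(\zeta^k)^2$ pieces collect into $O(\zeta^k)^2$; the clean derivation gives coefficient $1$ on $\twon{\hat\theta_j^k-\theta^*}^2$, so the stated $(1+O(\zeta^k)^2)$ factor is an admissible enlargement of the right-hand side, chosen to match the Robbins--Siegmund supermartingale template used in the subsequent lemmas.

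The main obstacle I anticipate is the error-bound step and, within it, the local-versus-global feasibility mismatch: node $j$'s single random projection only senses its own constraints $\{\Theta_j^{w}\}_{w=1}^{n_j}$, so the averaged violation naturally lower-bounds $d(v_j^k,\Theta\cap\Theta_j)^2$, whereas the lemma anchors at $y_j^k=\Pi_{\Theta_0}(\hat\theta_j^k)$ with the smaller set $\Theta_0\subseteq\Theta\cap\Theta_j$. Upgrading the bound to $\Theta_0$ is precisely what forces both a Slater-based \emph{global} linear-regularity constant and the qualifier ``$k\ge\tilde{k}$ sufficiently large'': for large $k$ the near-consensus established in Lemma~\ref{lem_mixing} brings all $v_i^k$ close, so that simultaneously controlling the local distances across nodes is tantamount to controlling the global one. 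Making this quantitative---ensuring $\kappa$ is uniform in $k$ and that the boundedness required for the error bound follows from Assumption~\ref{ass_random}(b)---is the delicate part; the remaining expansions are routine.
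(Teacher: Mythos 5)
Your proposal is correct and follows essentially the same route as the paper: the paper's proof simply invokes \cite[Proposition 1]{nedic2011random} after rewriting $v_j^k=\hat{\theta}_j^{k}-\zeta^k\nabla(c'\hat{\theta}_j^k)$, and what you have written out (Polyak step inequality, non-expansiveness of $\Pi_\Theta$, expansion of the consensus-plus-gradient step, conditional expectation under uniform sampling, the Slater-based error bound, and Young's inequality to absorb the cross terms) is precisely the content of that cited proposition. The one substantive point you raise --- that the per-node averaged violation naturally controls $d(v_j^k,\Theta\cap\Theta_j)$ rather than $d(v_j^k,\Theta_0)$, so upgrading the regularity constant to the global feasible set requires an extra argument --- is real, and it is exactly the step the paper's ``the rest of proof is trivial'' elides when passing from the centralized to the distributed setting; your near-consensus explanation for the qualifier $k\ge\tilde{k}$ is a reasonable way to close it, though neither you nor the paper carries that out quantitatively.
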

\begin{proof} The proof mostly follows from \cite{nedic2011random}, which however only focuses on the centralized version of Algorithm \ref{alg_dist1}. By the comments after Assumption 2 of \cite{nedic2011random}, it is clear that all conditions in \cite[Proposition 1]{nedic2011random}  are satisfied. By the row stochasticity of $A$, i.e., $\sum_{i=1}^m a_{ji}=1$, it follows that (\ref{dist_mix}) can be also written as
\bee
v_j^k=\hat{\theta}_j^{k}-\zeta^k\cdot \nabla \left( c'\hat{\theta}_j^k\right),\nonumber
\ene
where $\nabla \left( c'\hat{\theta}_j^k\right)$ is a gradient of the linear function $c'\theta$ evaluated at $\theta=\hat{\theta}_j^{k}$. The rest of proof is trivial by replacing $x_{k-1}$ in (21) of \cite{nedic2011random} with $\hat{\theta}_j^{k}$. The details are omitted.
\end{proof}
The second result essentially ensures the local feasibility.
\begin{lem}[Feasibility guarantee] \label{lem_thetay}Let $y_j^k$ be given in Lemma \ref{lem_sigmafield}. If $\lim_{k\rainfty}\twon{v_j^k-y_j^k}=0$,  it holds
$
\lim_{k\rainfty}\twon{\theta_j^{k+1}-y_j^k}=0
$
for any $j\in\cV$.
\end{lem}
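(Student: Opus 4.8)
The plan is to reduce the claim to a single deterministic contraction inequality that holds for \emph{every} realization of the random index $w_j^k$, and then pass to the limit by a squeezing argument. The crucial observation is that the projection point $y_j^k=\Pi_{\Theta_0}(\hat{\theta}_j^k)$ lies in the feasible set $\Theta_0$ of (\ref{set_feasible}); hence it satisfies \emph{all} of node $j$'s local constraints simultaneously. In particular, no matter which sample index $w_j^k$ is drawn, we have $y_j^k\in\Theta\cap\Theta_j^{w_j^k}$ and $f(y_j^k,q^{(jw_j^k)})_+=0$. This is what decouples the analysis from the randomization and turns the assertion into a purely geometric statement about the projected step (\ref{dist_digraph}).

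First I would establish a Polyak-type descent inequality for the update (\ref{dist_digraph}). Writing the intermediate point as $z_j^k=v_j^k-\beta\,\frac{f(v_j^k,q^{(jw_j^k)})_+}{\twon{d_j^k}^2}\,d_j^k$ so that $\theta_j^{k+1}=\Pi_\Theta(z_j^k)$, I would invoke the nonexpansiveness of the Euclidean projection onto the convex set $\Theta$ together with $y_j^k\in\Theta$ to obtain $\twon{\theta_j^{k+1}-y_j^k}^2\le\twon{z_j^k-y_j^k}^2$. Expanding the square and using the subgradient inequality $(d_j^k)'(v_j^k-y_j^k)\ge f(v_j^k,q^{(jw_j^k)})_+-f(y_j^k,q^{(jw_j^k)})_+=f(v_j^k,q^{(jw_j^k)})_+$, which is valid because $d_j^k\in\partial f(v_j^k,q^{(jw_j^k)})_+$ and $f(y_j^k,q^{(jw_j^k)})_+=0$, I would arrive at
\[
\twon{\theta_j^{k+1}-y_j^k}^2\le \twon{v_j^k-y_j^k}^2-\beta(2-\beta)\,\frac{\big(f(v_j^k,q^{(jw_j^k)})_+\big)^2}{\twon{d_j^k}^2}.
\]
The degenerate case $f(v_j^k,q^{(jw_j^k)})_+=0$ is handled directly, since then $\theta_j^{k+1}=\Pi_\Theta(v_j^k)$ and the bound follows from nonexpansiveness alone, irrespective of the arbitrary nonzero direction $d_j^k$.

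Since $\beta\in(0,2)$ the factor $\beta(2-\beta)$ is strictly positive, so the subtracted term is nonnegative and may be discarded, leaving the clean bound $\twon{\theta_j^{k+1}-y_j^k}\le\twon{v_j^k-y_j^k}$. This inequality holds surely, i.e.\ along every sample path, because the feasibility of $y_j^k$ for the selected constraint does not depend on the draw of $w_j^k$. The conclusion then follows by squeezing: under the hypothesis $\lim_{k\rainfty}\twon{v_j^k-y_j^k}=0$ we get $0\le\lim_{k\rainfty}\twon{\theta_j^{k+1}-y_j^k}\le\lim_{k\rainfty}\twon{v_j^k-y_j^k}=0$ for each $j\in\cV$.

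I do not anticipate a genuine obstacle in this lemma; the only points demanding care are the verification that $y_j^k$ is feasible for the \emph{randomly} selected constraint — so that the subgradient inequality collapses to the clean $f_+$ term — and the treatment of the boundary case $f_+=0$. The substantive difficulty of Theorem \ref{thm_drp} lies elsewhere: controlling $\twon{v_j^k-y_j^k}$ itself (equivalently, $\twon{\hat{\theta}_j^k-y_j^k}$ together with the vanishing drift $\zeta^k c$ from (\ref{dist_mix})) and fusing asymptotic feasibility with the stochastic-descent estimate of Lemma \ref{lem_sigmafield} and the consensus estimate of Lemma \ref{lem_mixing}. This lemma is merely the feasibility link in that chain.
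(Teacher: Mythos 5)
Your argument is correct and is essentially the paper's proof: the paper obtains the same inequality $\twon{\theta_j^{k+1}-y_j^k}^2 \le \twon{v_j^k-y_j^k}^2-\beta(2-\beta)\big(f(v_j^k,q^{(jw_j^k)})_+\big)^2/\twon{d_j^k}^2$ by citing Lemma 1 of the Nedi\'c--Polyak random projection paper (using precisely the observation that $y_j^k\in\Theta_0$ forces $f(y_j^k,q^{(jw_j^k)})_+=0$ for every draw of $w_j^k$), then discards the nonnegative term via $\beta\in(0,2)$ and takes limits. The only difference is that you re-derive that cited inequality from nonexpansiveness of $\Pi_\Theta$ and the subgradient inequality rather than quoting it.
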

\begin{proof}
Since $f(y_j^k,q^{(jw_j^k)})_+=0$, it follows from Lemma 1 of \cite{nedic2011random} that
$$
\twon{\theta_j^{k+1}-y_j^k}^2 \le \twon{v_j^k-y_j^k}^2-\beta(2-\beta)\frac{\big(f(v_j^k,q^{(jw_j^k)})_+\big)^2}{\twon{d_j^k}^2}.
$$
Together with the fact that $\beta\in(0,2)$, then $\twon{\theta_j^{k+1}-y_j^k}^2 \le \twon{v_j^k-y_j^k}^2$. Taking limits on both sides, the result follows. \end{proof}

Finally, we prove an asymptotic consensus result under Assumption \ref{ass_strong} where the consensus value is a weighted average of the state vector in each node.  This is different than
the case of balanced graphs. For a strongly connected digraph $\cG$, we have some preliminary results on its weighting matrix $A$ by directly using the Perron Theorem \cite{horn1985ma}.
\begin{lem}[Left eigenvector]\label{lem_lefteig}Under Assumption \ref{ass_strong}, there exists a normalized left eigenvector $\pi\in\bR^m$ of $A$ such that
\bee \label{lefteig}
\pi'A=\pi', \sum_{j=1}^m\pi_j=1~\text{and}~\pi_j>0, \forall j\in\cV.
\ene

Moreover, the spectral radius of the row-stochastic matrix $A-{\bf 1}\pi'$ is strictly less than one.
\end{lem}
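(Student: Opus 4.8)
The plan is to prove Lemma~\ref{lem_lefteig} in two stages, both resting on the Perron--Frobenius theory for nonnegative matrices. First I would establish the existence and positivity of the left eigenvector $\pi$, and then I would bound the spectral radius of $A-{\bf 1}\pi'$ by removing the Perron eigenvalue.

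For the first stage, observe that by Assumption~\ref{ass_strong} the graph $\cG$ is strongly connected, and the weighting matrix $A$ is adapted to $\cG$ with $a_{ij}>0$ exactly when $(i,j)\in\cE$. Strong connectivity of $\cG$ translates precisely into irreducibility of $A$: for any pair $(i,j)$ there is a directed path, hence a power $A^t$ with a positive $(i,j)$ entry. Since $A$ is also nonnegative and row-stochastic, the Perron--Frobenius Theorem for irreducible nonnegative matrices \cite{horn1985ma} applies. Row-stochasticity gives $A{\bf 1}={\bf 1}$, so $1$ is an eigenvalue with right eigenvector ${\bf 1}$, and because $A$ is stochastic its spectral radius equals $1$. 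By Perron--Frobenius, this Perron eigenvalue $1$ is simple, and the associated left eigenvector $\pi$ can be chosen entrywise positive, i.e. $\pi'A=\pi'$ with $\pi_j>0$ for all $j\in\cV$. Normalizing so that $\sum_{j=1}^m\pi_j=1$ yields exactly \eqref{lefteig}.

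For the second stage, I would analyze the spectrum of $A-{\bf 1}\pi'$. The idea is that subtracting the rank-one term ${\bf 1}\pi'$ performs a deflation that annihilates the Perron eigenvalue while leaving the remaining spectrum intact. Concretely, using $A{\bf 1}={\bf 1}$, $\pi'A=\pi'$, and the normalization $\pi'{\bf 1}=1$, one checks that $({\bf 1}\pi')A=A({\bf 1}\pi')={\bf 1}\pi'$ and $({\bf 1}\pi')^2={\bf 1}\pi'$, so ${\bf 1}\pi'$ is the spectral projector onto the Perron eigenspace commuting with $A$. It follows that the eigenvalues of $A-{\bf 1}\pi'$ are $1-1=0$ in place of the simple Perron eigenvalue, together with all the other eigenvalues $\mu$ of $A$ satisfying $|\mu|<1$ (strict because the Perron eigenvalue is simple). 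Hence every eigenvalue of $A-{\bf 1}\pi'$ has modulus strictly less than one, giving spectral radius strictly below one. I should note that $A-{\bf 1}\pi'$ is itself row-stochastic only in a generalized sense; in fact $(A-{\bf 1}\pi'){\bf 1}={\bf 1}-{\bf 1}(\pi'{\bf 1})={\bf 1}-{\bf 1}=0$, so its row sums are zero rather than one, and I would state this carefully to avoid confusion with the lemma's phrasing.

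The main obstacle, and the point requiring the most care, is justifying the \emph{strict} inequality $|\mu|<1$ for the non-Perron eigenvalues of $A$. Simplicity of the Perron eigenvalue from Perron--Frobenius guarantees that $1$ is not a repeated eigenvalue, but ruling out other unimodular eigenvalues $\mu\neq 1$ with $|\mu|=1$ requires a primitivity or aperiodicity argument, since a periodic irreducible stochastic matrix can possess several eigenvalues on the unit circle. The cleanest way to close this gap is to invoke the presence of positive diagonal entries: in our setting $a_{jj}=1-\sum_{i\neq j}a_{ji}$ is strictly positive for at least one node (indeed for any node whose out-degree does not saturate), which forces $A$ to be primitive (aperiodic), and primitivity of an irreducible stochastic matrix yields a unique eigenvalue of maximum modulus. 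Alternatively one may appeal directly to the convergence-of-averaging literature, e.g. the consensus result cited in \cite{you2011network}, to assert that the only unimodular eigenvalue is the simple Perron eigenvalue $1$. Either route establishes the strict spectral-radius bound and completes the proof.
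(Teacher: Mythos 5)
Your overall route is the same as the paper's: the paper offers no detailed argument for Lemma~\ref{lem_lefteig}, simply invoking the Perron Theorem of \cite{horn1985ma}, and your first stage (irreducibility of $A$ from strong connectivity, spectral radius one from row-stochasticity, simplicity of the Perron root and positivity of the left eigenvector) together with the rank-one deflation of $A$ by ${\bf 1}\pi'$ is exactly the standard way to flesh that citation out. Your side remark that $(A-{\bf 1}\pi'){\bf 1}=0$, so this matrix is not actually row-stochastic despite the lemma's phrasing, is also correct.

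The one genuine gap is in your final step, and it sits precisely where you sensed the difficulty. To exclude non-Perron eigenvalues of $A$ on the unit circle you assert that $a_{jj}$ is strictly positive for at least one node, but the paper's definition of the weighting matrix only guarantees $a_{jj}=1-\sum_{i\neq j}a_{ji}\ge 0$, so positivity of a diagonal entry does not follow from Assumption~\ref{ass_strong} alone. Indeed, if $\cG$ is a directed cycle and $A$ is the corresponding cyclic permutation matrix (a legitimate row-stochastic matrix adapted to a strongly connected graph, with every diagonal entry equal to zero), then the eigenvalues of $A$ are all the $m$-th roots of unity and the spectral radius of $A-{\bf 1}\pi'$ equals one, so the strict bound fails. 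Hence you must either add the hypothesis that $A$ is primitive --- for instance $a_{jj}>0$ for some (or all) $j$, which is the usual convention in the consensus literature and is what the paper implicitly relies on --- or the second claim of the lemma cannot be established; the appeal to \cite{you2011network} does not close this hole either, since that result concerns the null space of the Laplacian rather than the peripheral spectrum of $A$. With a primitivity hypothesis made explicit, your argument is complete.
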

\begin{lem}[Asymptotic consensus] \label{lem_mixing} Consider the following iteration
\bee
\theta_j^{k+1}=\sum_{i=1}^m a_{ji}\theta_i^k+n_j^k, \forall j\in\cV. \nonumber
\ene

Suppose that $\cG$ is strongly connected and $\lim_{k\rainfty}\twon{n_j^k}=0$.  Let $\bar{\theta}^k=\sum_{i=1}^m\pi_i\theta_i^k$, where $\pi_i$ is given in (\ref{lefteig}), it holds that
\bee
\lim_{k\rainfty}\twon{\theta_j^k-\bar{\theta}^k}=0, \forall j\in\cV. \label{consensus}
\ene
\end{lem}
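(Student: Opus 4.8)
The plan is to stack the per-node iterates and reduce the claim to the asymptotic stability of a linear recursion driven by a vanishing input. First I would collect $\theta^k := [(\theta_1^k)',\ldots,(\theta_m^k)']'$ and $n^k := [(n_1^k)',\ldots,(n_m^k)']'$, so that the iteration reads $\theta^{k+1} = (A\otimes I_n)\theta^k + n^k$. Since $\bar{\theta}^k=\sum_{i=1}^m\pi_i\theta_i^k=(\pi'\otimes I_n)\theta^k$ and ${\bf 1}\otimes\bar{\theta}^k=({\bf 1}\pi'\otimes I_n)\theta^k$, the quantity to control is the stacked disagreement
$$
e^k := \theta^k - {\bf 1}\otimes\bar{\theta}^k = \big((I_m-{\bf 1}\pi')\otimes I_n\big)\theta^k,
$$
whose $j$-th block is exactly $\theta_j^k-\bar{\theta}^k$. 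Hence (\ref{consensus}) is equivalent to $\twon{e^k}\rainfty 0$.

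Next I would derive a closed recursion for $e^k$. Setting $B := A - {\bf 1}\pi'$, the identities $\pi'A=\pi'$, $A{\bf 1}={\bf 1}$ and $\pi'{\bf 1}=1$ from Lemma \ref{lem_lefteig} give $(I_m-{\bf 1}\pi')A = B$ and $B{\bf 1}=0$, whence $B(I_m-{\bf 1}\pi')=B$. Using the mixed-product rule for Kronecker products, this last identity yields $(B\otimes I_n)\theta^k = (B\otimes I_n)\big((I_m-{\bf 1}\pi')\otimes I_n\big)\theta^k = (B\otimes I_n)e^k$. Substituting the stacked iteration into the definition of $e^{k+1}$ then produces the autonomous-plus-input form
$$
e^{k+1} = (B\otimes I_n)e^k + \big((I_m-{\bf 1}\pi')\otimes I_n\big)n^k.
$$
The key structural fact, supplied directly by Lemma \ref{lem_lefteig}, is that the spectral radius of $B$, and therefore of $B\otimes I_n$, is strictly less than one.

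Then I would invoke stability of this recursion. Because the spectral radius of $B\otimes I_n$ lies strictly below one, for any $\varepsilon>0$ small enough there exist a vector norm $\|\cdot\|_*$ and a constant $\sigma\in(0,1)$ with $\|(B\otimes I_n)x\|_*\le\sigma\|x\|_*$ for all $x$. Writing $\epsilon_k := \|((I_m-{\bf 1}\pi')\otimes I_n)n^k\|_*$, the hypothesis $\lim_{k\rainfty}\twon{n_j^k}=0$ (for every $j\in\cV$) forces $\epsilon_k\rainfty 0$, and the recursion gives $\|e^{k+1}\|_*\le\sigma\|e^k\|_*+\epsilon_k$. Unrolling yields
$$
\|e^k\|_*\le\sigma^k\|e^0\|_*+\sum_{t=0}^{k-1}\sigma^{k-1-t}\epsilon_t,
$$
where the first term vanishes geometrically and the second, being the convolution of a summable geometric kernel with a null sequence, also tends to zero. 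By equivalence of norms this gives $\twon{e^k}\rainfty 0$, which is precisely (\ref{consensus}).

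The hard part will be the final convergence step: arguing rigorously that driving an asymptotically stable linear recursion with an input that merely tends to zero (and need not be summable) produces a state that tends to zero. The convolution estimate above settles this, but the two delicate points are the choice of a norm in which $B\otimes I_n$ is a genuine contraction (available only up to an arbitrarily small slack above the spectral radius) and the elementary lemma that the convolution of a geometric sequence with a null sequence is itself null. Everything upstream is routine linear algebra, because the decisive spectral bound $\twon{\,\cdot\,}$-contractivity of $B$ has already been secured in Lemma \ref{lem_lefteig}.
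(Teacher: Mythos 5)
Your proof is correct and follows essentially the same route as the paper: both define the stacked disagreement $\big((I_m-{\bf 1}\pi')\otimes I_n\big)\theta^k$, derive the closed recursion driven by $(A-{\bf 1}\pi')\otimes I_n$ using $\pi'A=\pi'$ and $A{\bf 1}={\bf 1}$, and conclude from the sub-unit spectral radius guaranteed by Lemma \ref{lem_lefteig}. The only cosmetic difference is that the paper dispatches the final step (geometric kernel convolved with a null sequence) by citing the Toeplitz Lemma, whereas you spell out the norm-equivalence and convolution estimate explicitly.
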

\begin{proof} Clearly, we can compactly write $\bar{\theta}^k=(\pi'\otimes I_n)\theta^{k}$. In view of  (\ref{lefteig}) and (\ref{consensus}), we have the following relation
\bee
{\bf 1}(\pi'\otimes I_n)\theta^{k+1}={\bf 1}(\pi'\otimes I_n)\theta^{k}+{\bf 1}(\pi'\otimes I_n)n^k. \label{average}
\ene

Let $\delta^k=((I_n-{\bf 1}\pi')\otimes I_n)\theta^k$, which is a vector of displacement from the weighted average. Then, it follows from (\ref{average}) that
\bee
\delta^{k+1}=((A-{\bf 1}\pi')\otimes I_n)\delta^k+((I-{\bf 1}\pi')\otimes I_n)n^k.\nonumber
\ene

Define $\varrho$ as the spectral radius of $(A-{\bf 1}\pi')\otimes I_n$, it is clear from Lemma \ref{lem_lefteig} that $0<\varrho<1$.  Jointly with the fact that $\lim_{k\rainfty}\twon{n^k}=0$ and    Lemma 6.1.1\cite{ash2000pam}, it follows that $\lim_{k\rainfty}\twon{\delta^k}=0$.
\end{proof}

The proof also depends crucially on the well-known super-martingale convergence theorem, which is due to Robbins-Siegmund \cite{robbins1985convergence}, see also Proposition A.4.5 in \cite{bertsekas2015convex}. This result is now restated for completeness.
\begin{thm}[Super-martingale convergence theorem]\label{thm_martingale}
Let $\{y^k\}, \{z^k\}, \{w^k\}$ and $\{v^k\}$ be four non-negative sequences of random variables, and let $\cF^k, k=0,1,\ldots,$ be sets of random variables such that $\cF^k\subseteq\cF^{k+1}$ for all $k$. Assume that
\begin{enumerate}
\renewcommand{\labelenumi}{\rm(\alph{enumi})}
\item For each $k$, let $y^k,z^k,w^k$ and $v^k$ be functions of the random variables in $\cF^k$.
\item The inequalities hold almost surely
$$\bE[y^{k+1}|\cF^k]\le(1+v^k)y^k-z^k+w^k, k=0,1,\ldots, \text{and}$$$$\sum_{k=0}^\infty w^k<\infty,~\sum_{k=0}^\infty v^k<\infty.$$
\end{enumerate}
Then, $\{y^k\}$ converges almost surely to a nonnegative random variable $y$, and $\sum_{k=0}^\infty z^k <\infty$.
\end{thm}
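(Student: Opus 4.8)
The plan is to reduce the statement to the classical convergence theorem for non-negative supermartingales (Doob), since hypothesis (b) only makes $\{y^k\}$ an \emph{almost} supermartingale, not a genuine one. First I would absorb the multiplicative factor $(1+v^k)$ by normalizing. Set $\gamma_k=\prod_{i=0}^{k-1}(1+v^i)$ with $\gamma_0=1$; by (a) each $v^i$ is measurable with respect to $\cF^i\subseteq\cF^k$, so $\gamma_{k+1}$ is $\cF^k$-measurable and strictly positive, and since $\sum_k v^k<\infty$ almost surely the products converge to a finite limit $\gamma_\infty\in[1,\infty)$. Dividing (b) by $\gamma_{k+1}$ and using $\gamma_{k+1}=\gamma_k(1+v^k)$ gives, for the rescaled variables $\tilde y^k=y^k/\gamma_k$, $\tilde z^k=z^k/\gamma_{k+1}$, $\tilde w^k=w^k/\gamma_{k+1}$, the cleaner recursion $\bE[\tilde y^{k+1}|\cF^k]\le \tilde y^k-\tilde z^k+\tilde w^k$, with $\sum_k\tilde w^k\le\sum_k w^k<\infty$ a.s. Because $\gamma_k$ stays between $1$ and $\gamma_\infty$, convergence of $\tilde y^k$ and summability of $\tilde z^k$ are equivalent to the corresponding statements for $y^k$ and $z^k$; hence I may assume $v^k\equiv 0$ from now on.

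Next I would manufacture an honest supermartingale. Put $M^k=y^k+\sum_{t=0}^{k-1}z^t-\sum_{t=0}^{k-1}w^t$. Using $\bE[y^{k+1}|\cF^k]\le y^k-z^k+w^k$ together with the adaptedness of $z^t$ and $w^t$, a one-line computation yields $\bE[M^{k+1}|\cF^k]\le M^k$, so $\{M^k\}$ is a supermartingale with respect to $\{\cF^k\}$. Moreover $M^k\ge -\sum_{t=0}^{k-1}w^t\ge -W$, where $W:=\sum_{t=0}^\infty w^t<\infty$ almost surely.

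The main obstacle is precisely that this lower bound $W$ is a finite but possibly \emph{non-integrable} random variable, so I cannot directly control $\bE[(M^k)^-]$ and invoke supermartingale convergence on the whole space. I would circumvent this by localization. For each integer $m\ge1$ define $\tau_m=\inf\{k:\sum_{t=0}^{k}w^t>m\}$, which is a stopping time because $\sum_{t=0}^{k}w^t$ is $\cF^k$-measurable. The stopped process $M^{k\wedge\tau_m}$ is again a supermartingale, and for $k\le\tau_m$ the subtracted sum is at most $m$, so $M^{k\wedge\tau_m}\ge -m$ is bounded below by a deterministic constant; hence $M^{k\wedge\tau_m}+m$ is a non-negative supermartingale and converges almost surely by the classical non-negative supermartingale convergence theorem. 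On the event $\{W<m\}$ one has $\tau_m=\infty$, so $M^{k\wedge\tau_m}=M^k$ for every $k$, and $M^k$ itself converges a.s. there. Letting $m\rainfty$ and using $\bP\{W<\infty\}=1$, the events $\{W<m\}$ exhaust the sample space up to a null set, so $M^k$ converges almost surely.

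Finally I would read off the two conclusions. Since $\sum_{t=0}^{k-1}w^t\to W<\infty$ a.s., the partial sums $S^k:=y^k+\sum_{t=0}^{k-1}z^t=M^k+\sum_{t=0}^{k-1}w^t$ converge a.s. to a finite limit. The sequence $\sum_{t=0}^{k-1}z^t$ is non-decreasing and, since $y^k\ge0$, bounded above by the convergent $S^k$, so $\sum_{t=0}^\infty z^t<\infty$ almost surely; consequently $y^k=S^k-\sum_{t=0}^{k-1}z^t$ converges almost surely to a non-negative limit $y$. Undoing the rescaling of the first step, by multiplying through by $\gamma_k\to\gamma_\infty$, transfers both conclusions back to the original $y^k$ and $z^k$, which completes the argument.
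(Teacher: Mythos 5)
The paper contains no proof of this statement to compare against: Theorem~\ref{thm_martingale} is quoted as the classical Robbins--Siegmund result, with the paper explicitly deferring to the original reference and to Proposition~A.4.5 of Bertsekas, and ``restated for completeness.'' Your proposal therefore supplies something the paper omits, and it is correct; moreover it is, in essence, the standard Robbins--Siegmund argument. All three of your moves check out: the normalization $\gamma_k=\prod_{i=0}^{k-1}(1+v^i)$ is legitimate because $\gamma_{k+1}$ is $\cF^k$-measurable and positive (so it passes through the conditional expectation) and $\sum_k v^k<\infty$ a.s.\ gives $1\le\gamma_k\le\gamma_\infty<\infty$, making the reduction to $v^k\equiv 0$ lossless; the compensated process $M^k=y^k+\sum_{t=0}^{k-1}z^t-\sum_{t=0}^{k-1}w^t$ is indeed turned into a supermartingale by hypothesis (b) and adaptedness; and your localization by $\tau_m=\inf\{k:\sum_{t=0}^{k}w^t>m\}$ correctly repairs the genuine obstruction you identify, namely that the lower bound $W=\sum_t w^t$ is a.s.\ finite but possibly non-integrable, since $M^{k\wedge\tau_m}+m$ is non-negative and $\{W<m\}\subseteq\{\tau_m=\infty\}$ exhausts the space as $m\rainfty$. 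The final bookkeeping (monotone partial sums of $z^t$ dominated by the convergent $S^k$, hence summable, hence $y^k$ convergent) is also sound. One small point you leave implicit: a supermartingale is by definition integrable, and nothing in the statement forces $\bE[y^k]<\infty$, so $M^{k\wedge\tau_m}$ could in principle fail integrability through $y^0$. This is fixed by exactly the localization idea you already use: intersect with the $\cF^0$-measurable events $\{y^0\le m\}$, on which an induction bounds the expectations and yields a true non-negative supermartingale, and let $m\rainfty$ using that $y^0$ is a.s.\ finite; equivalently, invoke the convergence theorem for generalized (non-integrable) non-negative supermartingales. With that one-sentence patch your proof is complete and self-contained, which is arguably a more satisfying treatment than the paper's citation-only handling.
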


Combine the above, we are ready to prove Theorem  \ref{thm_drp}.

{\em Proof of Theorem \ref{thm_drp}.} By the convexity of $\twon{\cdot}^2$ and the row stochasticity of $A$, i.e., $\sum_{i=1}^m a_{ji}=1$, it follows that
\bee
\twon{\hat{\theta}_j^{k}-\theta^*}^2 \le \sum_{i=1}^m a_{ji}\twon{\theta_i^k-\theta^*}^2.\nonumber
\ene
Jointly with (\ref{key_inequality}), we obtain that for all $k\ge \tilde{k}$,
\begin{eqnarray}
&&\bE[\twon{\theta_j^{k+1}-\theta^*}^2|\cF_{k}]\le \left(1+r_1(\zeta^k)^2\right)\sum_{i=1}^ma_{ji}\twon{{\theta}_i^k-\theta^*}^2 \nonumber\\
&&-2\zeta^kc'(y_j^k-\theta^*) - r_2(\twon{\hat{\theta}_j^{k}-y_j^k}^2)+r_3(\zeta^k)^2,\label{inequal}
\end{eqnarray}
where the sigma-field $\cF^k$ is given in (\ref{sigmafield}).

Under Assumption \ref{ass_strong}, the weighting matrix $A$ of $\cG$ is only row stochastic, and not doubly stochastic, which is assumed in \cite{lee2016asynchronous}. This implies that the first term in (\ref{dist_mix}) does not satisfy average consensus. Instead, it converges to the weighted average consensus where the weight is determined by the left eigenvector $\pi\in\bR^m$ of $A$ associated with the simple eigenvalue $1$, i.e., $\pi' A=\pi$, see Lemma \ref{lem_mixing}. Since the graph $\cG$ is strongly connected, it is clear that $\pi_j>0$ for all $j\in\cV$.

Then,  we multiply both sides of (\ref{inequal}) with $\pi_j$ and sum over $j$, which leads to
\begin{eqnarray}
&&\bE[\sum_{j=1}^m\pi_j \twon{\theta_j^{k+1}-\theta^*}^2|\cF_{k}] \label{inequalityd}\\
&&\le \left(1+r_1(\zeta^k)^2\right)\sum_{j=1}^m \pi_j \big (\sum_{j=1}^m a_{ji} \twon{\theta_j^{k}-\theta^*}^2\big ) \nonumber \\
&&-2\zeta^kc'(\bar{y}^k-\theta^*) - \sum_{j=1}^m \pi_j \left( r_2(\twon{\hat{\theta}_j^{k}-y_j^k}^2)+r_3(\zeta^k)^2\right) \nonumber \\
&&\le \left(1+r_1(\zeta^k)^2\right)\sum_{j=1}^m\pi_j \twon{\theta_j^{k}-\theta^*}^2 \nonumber \\
&&-2\zeta^kc'(\bar{y}^k-\theta^*) -r_2 \sum_{j=1}^m \pi_j (\twon{\hat{\theta}_j^{k}-y_j^k}^2)+r_3(\zeta^k)^2\nonumber
\end{eqnarray}
where the first inequality uses the fact that $\bar{y}^k=\sum_{j=1}^m\pi_j y_j^k$ and $\sum_{j=1}^m \pi_j=1$. The second inequality follows from the definition of $\pi$, i.e., $\pi_j=\sum_{i=1}^m \pi_ia_{ji}$.

By Theorem \ref{thm_martingale}, it holds almost surely that $\{\sum_{j=1}^m\pi_j\twon{\theta_j^{k}-\theta^*}^2\}$ is convergent for any $j\in\cV$ and $\theta^*\in\Theta^*$,
\bee
\sum_{k=1}^{\infty}\zeta^kc'(\bar{y}^k-\theta^*)<\infty \label{martingale1}
\ene
and
\bee
\sum_{k=1}^{\infty}\sum_{j=1}^m \pi_j \twon{\hat{\theta}_j^{k}-y_j^k}^2<\infty.\label{martingale2}
\ene

The rest of the proof is completed by showing the following two claims.

{\em Claim 1:}~$\{\twon{\bar{y}^k-\theta^*}\}$ converges almost surely.

In light of (\ref{martingale2}), it holds that $\{\twon{y_j^k-\hat{\theta}_j^{k}}\}$ converges to zero almost surely. Since $\zeta^k \ra 0$, it follows from (\ref{dist_mix}) that $\{\twon{v_j^k-\hat{\theta}_j^k}\}$ converges almost surely to zero as well. Combing the preceding two relations, it holds almost surely that $\lim_{k\rainfty}\twon{y_j^k-v_j^k}=0$. Together with Lemma \ref{lem_thetay}, it holds almost surely that $
\lim_{k\rainfty}\twon{\theta_j^{k+1}-y_j^k}=0
$
for any $j\in\cV$.  Since $\{\sum_{j=1}^m\pi_j \twon{\theta_j^{k+1}-\theta^*}^2\}$ converges almost surely, this implies that $\{\sum_{j=1}^m\pi_j \twon{y_j^k-\theta^*}^2\}$ converges as well.

By (\ref{dist_mix}) and (\ref{dist_digraph}), we have the following dynamics
\bee
\theta_j^{k+1}=\sum_{i=1}^ma_{ji}\theta_i^k+n_j^k
\ene
where $n_j^k=\theta_j^{k+1}-v_j^k-\zeta^k c$. Since the inequality
\bee
\twon{n_j^k}\le \twon{\theta_j^{k+1}-y_j^k}+\twon{y_j^k-v_j^k}+\zeta^k \twon{c}, \nonumber
\ene
holds, it is obvious that $\lim_{k\rainfty}\twon{n_j^k}=0$ almost surely. Together with Lemma \ref{lem_mixing}, we obtain that $\lim_{k\rainfty}\twon{\theta_j^k-\bar{\theta}^k}=0$ almost surely.

Since $\pi_j=\sum_{i=1}^ma_{ji}\pi_i$, it holds that $\bar{\theta}^k=\sum_{j=1}^m \pi_j \theta_j^k =\sum_{i=1}^m \pi_i \hat{\theta}_i^k$. Then, we obtain that
\beq
\twon{\hat{\theta}^k_j-\sum_{i=1}^m \pi_i \hat{\theta}_i^k}\hspace{-.5cm}&&=\twon{\sum_{i=1}^m a_{ji}{\theta}_i-\bar{\theta}^k}\nonumber\\
&&\hspace{-.5cm}\le \sum_{i=1}^m a_{ji}\twon{\theta_i^k-\bar{\theta}^k}\ra 0~\text{as}~k\rainfty.\nonumber
\enq

Since $\lim_{k\rainfty}\twon{y_j^k-\hat{\theta}_j^{k}}=0$, it follows that
$
\twon{y^k_j-\bar{y}^k}\le \twon{y_j^k-\hat{\theta}_j^{k}}+\twon{\hat{\theta}^k_j-\sum_{i=1}^m \pi_i \hat{\theta}_i^k}+\sum_{i=1}^m \pi_i
\twon{\hat{\theta}_i^k-y_i^k}
$, which converges almost surely to zero as $k\rainfty$ by using the above relations. Jointly with the fact that $\{\sum_{j=1}^m\pi_j \twon{y_j^k-\theta^*}^2\}$ converges, we obtain that $\{\twon{\bar{y}^k-\theta^*}\}$ converges almost surely.

{\em Claim 2:} There exists $\theta^*_0\in\Theta^*$ such that $\lim_{k\rainfty}\theta_j^k=\theta^*_0$ for all $j\in\cV$ with probability one.

By  (\ref{stepsize}) and (\ref{martingale1}), it follows that $\liminf_{k\rainfty}c'\bar{y}^k=c'\theta^*$, which implies that there exists a subsequence of $\{\bar{y}^k\}$ that converges almost surely to some point in the optimal set $\Theta^*$, which is denoted as $\theta^*_0$. Jointly with Claim 1 that $\{\twon{\bar{y}^k-\theta^*_0}\}$ converges, it follows that $\lim_{k\rainfty}\bar{y}^k=\theta^*_0$ almost surely.  Finally, we note that $\twon{\theta_j^{k+1}-\theta^*_0}\le \twon{\theta_j^{k+1}-y_j^{k+1}}+\twon{y_j^{k+1}-\bar{y}^{k+1}}+\twon{\bar{y}^{k+1}-\theta^*_0}$, which converges almost surely to zero as $k\rainfty$. Thus,  Claim 2 is proved.\qed
\begin{cor}[Error bounds] Under the conditions of Theorem \ref{thm_drp}, let $\check{y}^k=\frac{1}{t^k}\sum_{t=1}^k\zeta^t \bar{y}^t$ and $e^k=c'(\check{y}^k-\theta^*)$. Then, for all $k \ge \tilde{k}$, it holds that
\bee
0\le \bE[e^k]\le \frac{c^k}{2t^k} ~\text{and}~ \bE[\twon{\theta_j^k-y_j^k}^2]\le \frac{c^k}{a_{jj}\pi_j k}
\ene
\end{cor}
where $y_j^k=\Pi_{\Theta_0}(\hat{\theta}_j^{k})$ is feasible and $$c^k=\exp(r_1\sum_{t=1}^k(\zeta^t)^2)\big(\sum_{j=1}^m \pi_j \twon{\theta_j^1-\theta^*}^2+r_3\sum_{t=1}^k(\zeta^t)^2\big).$$  
\begin{proof} Note that $y_j^k$ is feasible and $\prod_{t=1}^k(1+r_1(\zeta^t)^2)\le \prod_{t=1}^k \exp(r_1(\zeta^t)^2)<\infty$. By  (\ref{inequalityd}),  the proof requires tedious but easy algebraic operations and is omitted to save space.  
\end{proof}

As in Section \ref{sec_est}, Algorithm \ref{alg_dist1} can also be modified to deal with the case of stochastically time-varying graphs.

\subsection{Comparison with the Distributed Primal-Dual Algorithm}

In this subsection, we compare the previously two algorithms. First, although both algorithms are designed from different perspectives, they essentially converge as fast as $O(1/\sum_{t=1}^k\zeta^t)$.  Let $0<\alpha\le 0.5$, it follows from  (\ref{stepsize}) that it suffices to select $\zeta^t=t^{-(0.5+\alpha)}$, and
$$\sum_{t=1}^k\zeta^t \approx\int_{0}^k t^{-(0.5+\alpha)} dt=\left\{ \begin{array}{ll}k^{0.5-\alpha}, &\text{if}~0<\alpha<0.5, \\ \ln k, &\text{if}~\alpha=0.5. \end{array}\right.$$ This implies that the convergence rate of both algorithms can be as fast as $O(1/\sqrt{k})$, which is an optimal rate for a generic sub-gradient algorithm, see Page 9 in \cite{boyd2011subgradient}.

Second, the primal-dual algorithm is originated from sub-gradient methods for finding a saddle point of the augmented Lagrangian. In \cite{bertsekas2015convex}, there are quite a few methods to accelerate the sub-gradient method, which may provide many opportunities  to accelerate the networked primal-dual algorithms. This is not obvious for Algorithm \ref{alg_dist1} since there is no clear way to accelerate its convergence.

Third, the computational cost of both algorithms is low at each iteration. The algorithms are well-suited for the computing nodes with limited computation and memory capability.

\section{Application Example: Robust Identification}
\label{sec_simulation}
To illustrate effectiveness of the proposed distributed algorithms, we consider a RCO problem in (\ref{rcp_opt}) with linearly structured uncertainties in an identification problem where we seek to estimate the impulse response $\theta$ of a discrete-time system for its input $u$ and output $y$.

Assume that the system is linear, single input single output and of order $n$, and that $u$ is zero for negative time indices and $\theta,u$ and $y$ are related by the convolution equations $y=U\theta$ where $U$ is a lower-triangular Toeplitz matrix whose first column is $u$, i.e., let $u=[u_1,\ldots,u_n]'$, then
$$U=\begin{bmatrix}u_1 &0&\ldots &0 \\ u_2&u_1&\ldots & 0 \\ \vdots & \vdots & \ddots &\vdots \\ u_n & u_{n-1} & u_2&u_1\end{bmatrix}.$$

 Suppose that the actual input and output are $u+\delta u$ and $y+\delta y$, respectively. Then, the standard least squares (LS) are not appropriate as the perturbation $\delta u$ and $\delta y$ are unknown.  To solve it, let $q=[\delta u', \delta y']'\in \bR^{2n}$. From the worst point of view, $\theta$ is obtained by solving a RCO problem
\bee
\min_{\theta, t} t ~\text{subject to}~\twon{(y+\delta y)-(U+\delta U)\theta}\le t, \forall q\in\cQ. \label{sol_sp}
\ene

If $\cQ=\{q|\twon{q}_{\infty}\le \rho\}$, where $\rho$ represents the uncertainty size, it is a structured robust LS problem, which is NP-complete \cite{el1997robust}. Thus, we approximately solve it by the scenario approach via distributed Algorithms \ref{alg_dist} and \ref{alg_dist1}, and set $u=[1~2~3]'$ and $y=[4~5~6]'$. While for the SP,  we consider $\epsilon=0.002$ and $\delta=10^{-4}$. This implies from (\ref{scenariosize}) that $N_{bin}\ge 8868$. Here we set $N_{bin}=10000$   and each node  independently extracts samples via a uniform distribution over the uncertainty set $\cQ$. 

 We adopt three types of undirected graphs, see Fig. \ref{fig_graph} where the random graph is obtained by further connecting node $i$ to node $j (\neq i+1)$ with probability $p=0.2$ in a cycle graph.
A directed random graph is originated from the undirected one. Specifically, node $i$ is connected to node $i+1$ in the clockwise direction, and the direction of every other link is randomly selected with equal probability.
 
\begin{figure}[!t]	
		\centering
		\includegraphics[width=8.5cm,height=3cm]{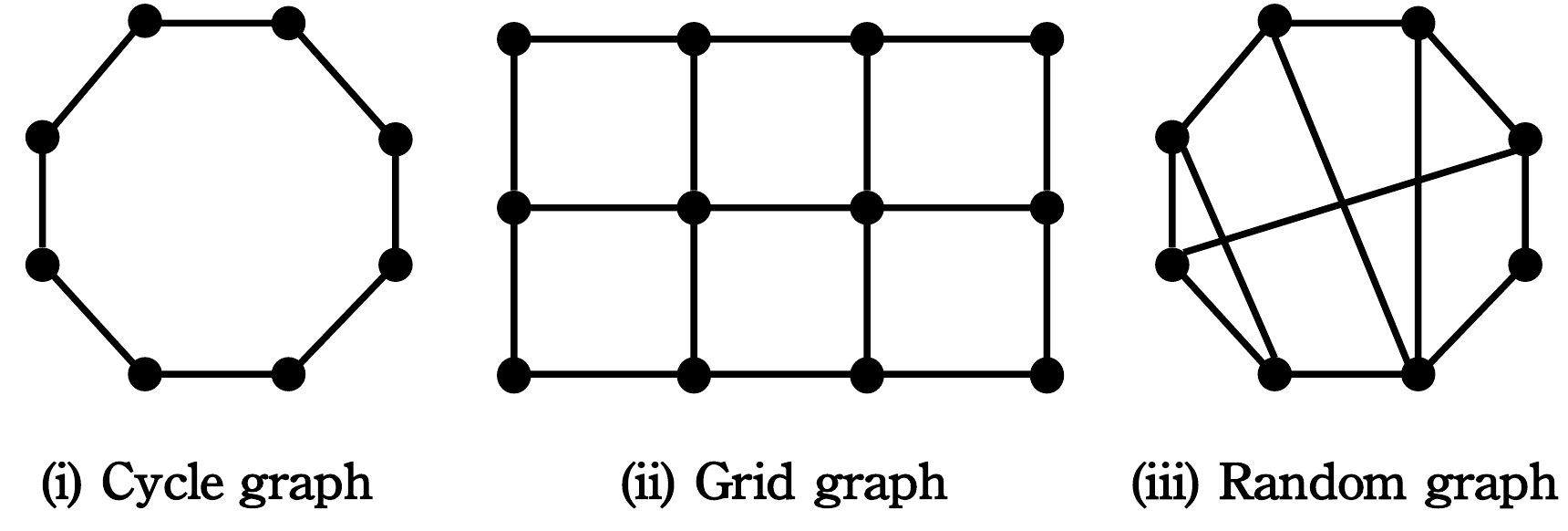}
		\caption{Three types of graphs.}
		\label{fig_graph}	
\end{figure}

 Given an uncertainty size, define the maximum of the scenario-based residuals by
\bee
r(\theta,\rho)=\max_{i=1,\ldots,N_{bin}}\twon{(y+\delta y^{(i)})-(U+\delta U^{(i)})\theta}.\label{residual}
\ene
Let $\theta_{ls}=U^{-1}y$ be the solution of the standard LS  and $\theta_{sc}$ be solution to the SP of (\ref{sol_sp}), which is computed by Algorithm \ref{alg_dist}. We depict the maximum residuals of (\ref{residual}) in Fig. \ref{fig_error} under different sizes of uncertainty, which shows the robustness of the solution of the SP.   Then, we compare the convergence behavior of the proposed algorithms for the SP of (\ref{sol_sp}) with $N_{bin}=10000, \rho=0.2$ and $\zeta^k=2/k$ in (\ref{stepsize}).   Fig. \ref{fig_iteration} shows that Algorithm \ref{alg_dist} converges to a solution of the SP much faster than that of Algorithm \ref{alg_dist1}.

\begin{figure}[!t]	
	\centering
	\begin{subfigure}[t]{0.5\linewidth}
		\centering
		\includegraphics[width=\textwidth,height=3.75cm]{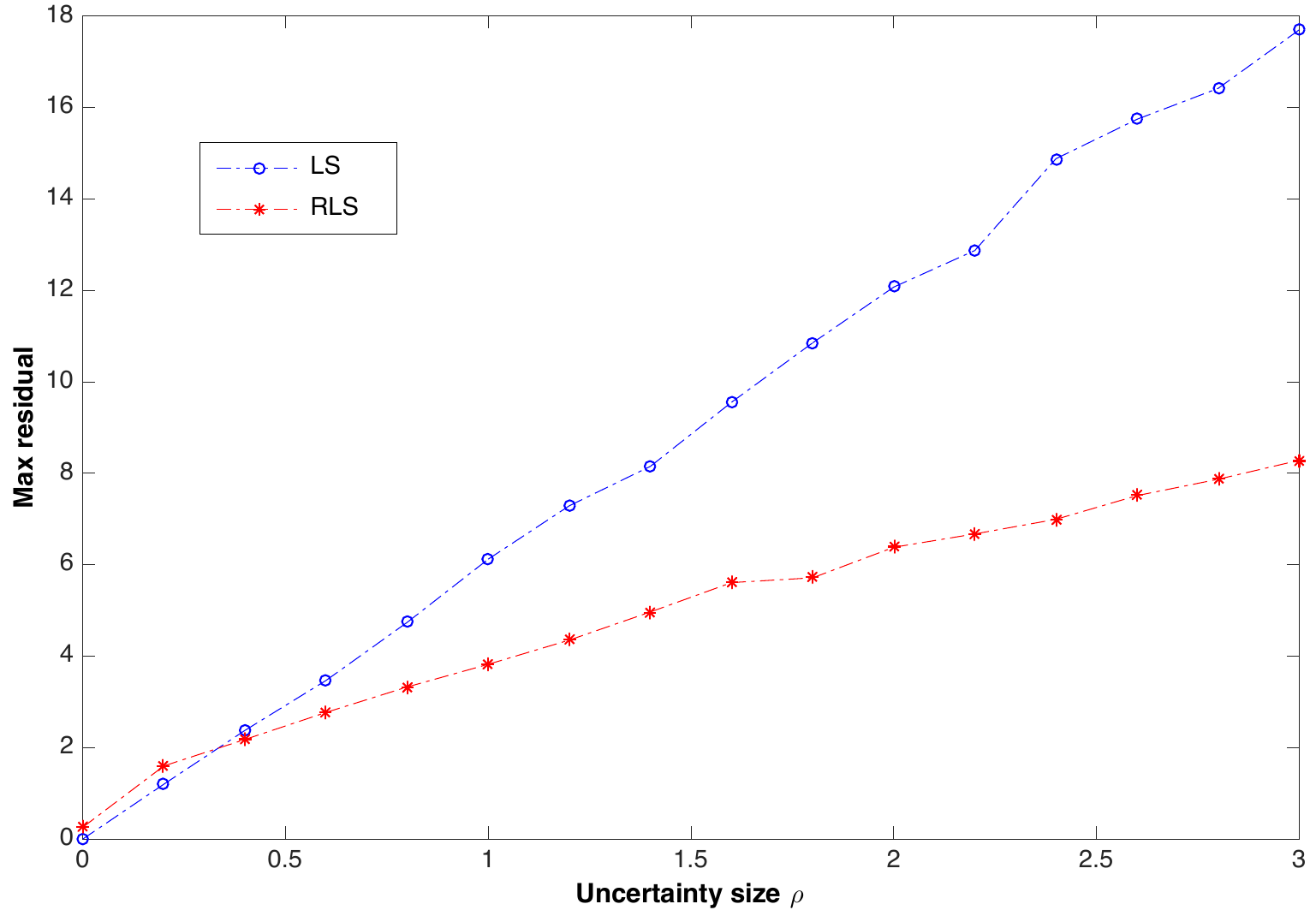}
		\caption{}
		\label{fig_error}	
	\end{subfigure}%
	\begin{subfigure}[t]{0.5\linewidth}
		\centering
		\includegraphics[width=\textwidth,height=3.75cm]{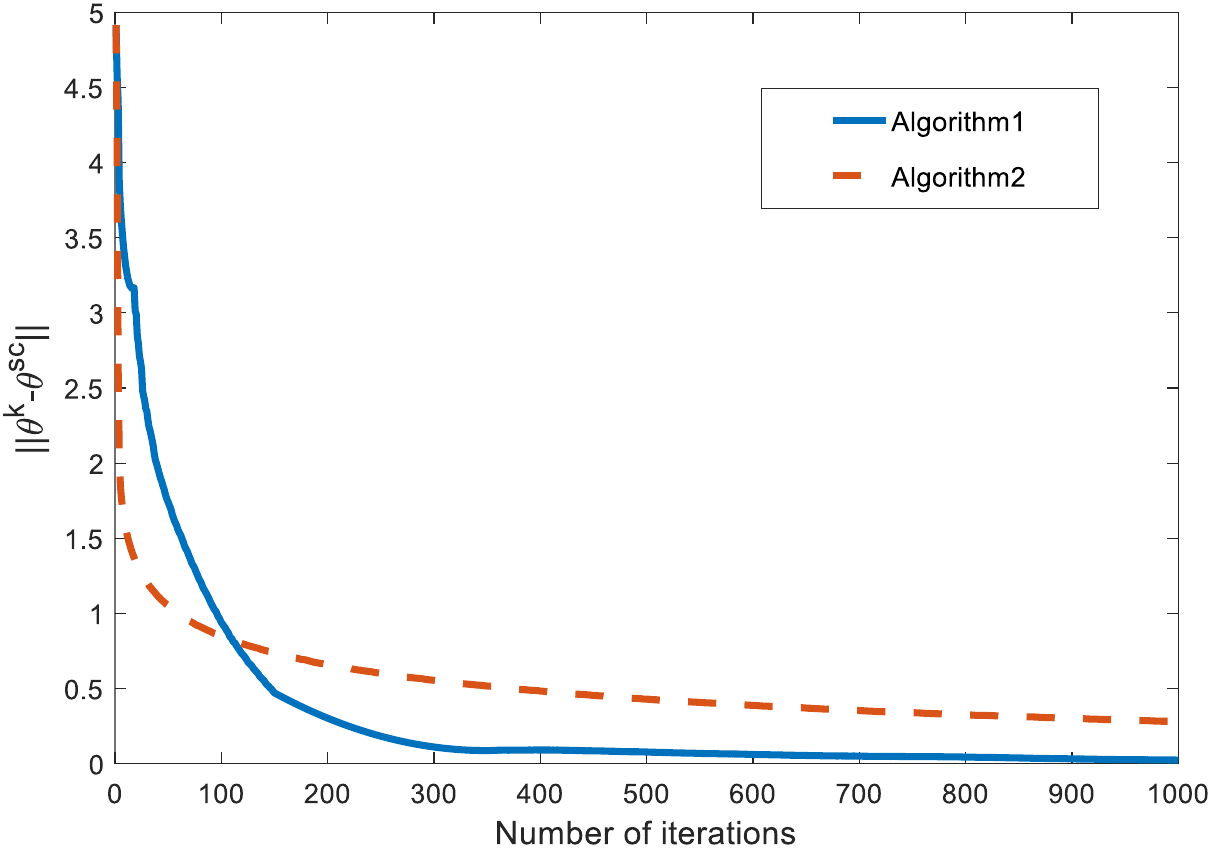}
		\caption{}
		\label{fig_iteration}
	\end{subfigure}
	\caption{(a) Maximum residual versus uncertainty size. (b) Convergence behaviors of Algorithms \ref{alg_dist} and \ref{alg_dist1} with $\beta=1.5$ on  undirected and directed random graphs with $m=100$.}
	\label{fig2:exp2}
\end{figure}

Since for both algorithms the dimensions of the data in crossing a communication link and being stored and retrieved in local memory are constant, one can argue that the total time to run our algorithms is essentially given by  
$$
T_{total}=T_{comp}+\alpha\cdot N_{iter},
$$
where $T_{comp}$ is the time attributed just to computation,  $\alpha$ is a constant which mainly depends on the network topology, the communication protocol and the  memory access speed, and $N_{iter}$ denotes the number of iterations. Let $T_i^k$ be the time cost to compute the $k$-th iteration of node $i$, i.e., running steps 3 and 4 in Algorithm \ref{alg_dist}.  Then, it follows from \cite[Section 1.2.2]{bertsekas2003parallel} that $T_{comp}=\sum_{k} \max_{i\in\cV}\{T_i^k\}$. Fig. \ref{fig_timevsnode} illustrates how the number of nodes affects $T_{comp}$, which decreases rapidly if the node number $m$ is small, and is indistinguishable for three types of network topologies as each node only involves simple numerical operations. This is consistent with our objective to reduce the computation cost of each node. Moreover, Fig. \ref{fig_timevsnode} also indicates that $T_{comp}/m$ is uniformly bounded away from zero, showing the practicability of the proposed distributed algorithm \cite[Section 1.2.2]{bertsekas2003parallel}. Ideally, $T_{comp}/m$ needs to be a constant, which is however not attainable \cite[Section 1.2.2]{bertsekas2003parallel}. 

Fig. \ref{fig_density} illustrates that the graph with denser communication links requires a smaller number of iterations, which is clearly consistent with our intuition as the information is  mixing faster over a denser graph. However, this requires a higher communication cost. By Fig. \ref{fig2:exp2}, one can conclude that designing an optimal topology is extremely complicated, and requires an optimal tradeoff among the communication topology, the number of nodes, and the computation and storage capacity of a single node, some of which are highly coupled. Similar phenomenon can be observed for Algorithm \ref{alg_dist1} and is not included to save space.
\begin{figure}[!t]	
	\centering
	\begin{subfigure}[t]{0.5\linewidth}
		\centering
		\includegraphics[width=\textwidth,height=3.75cm]{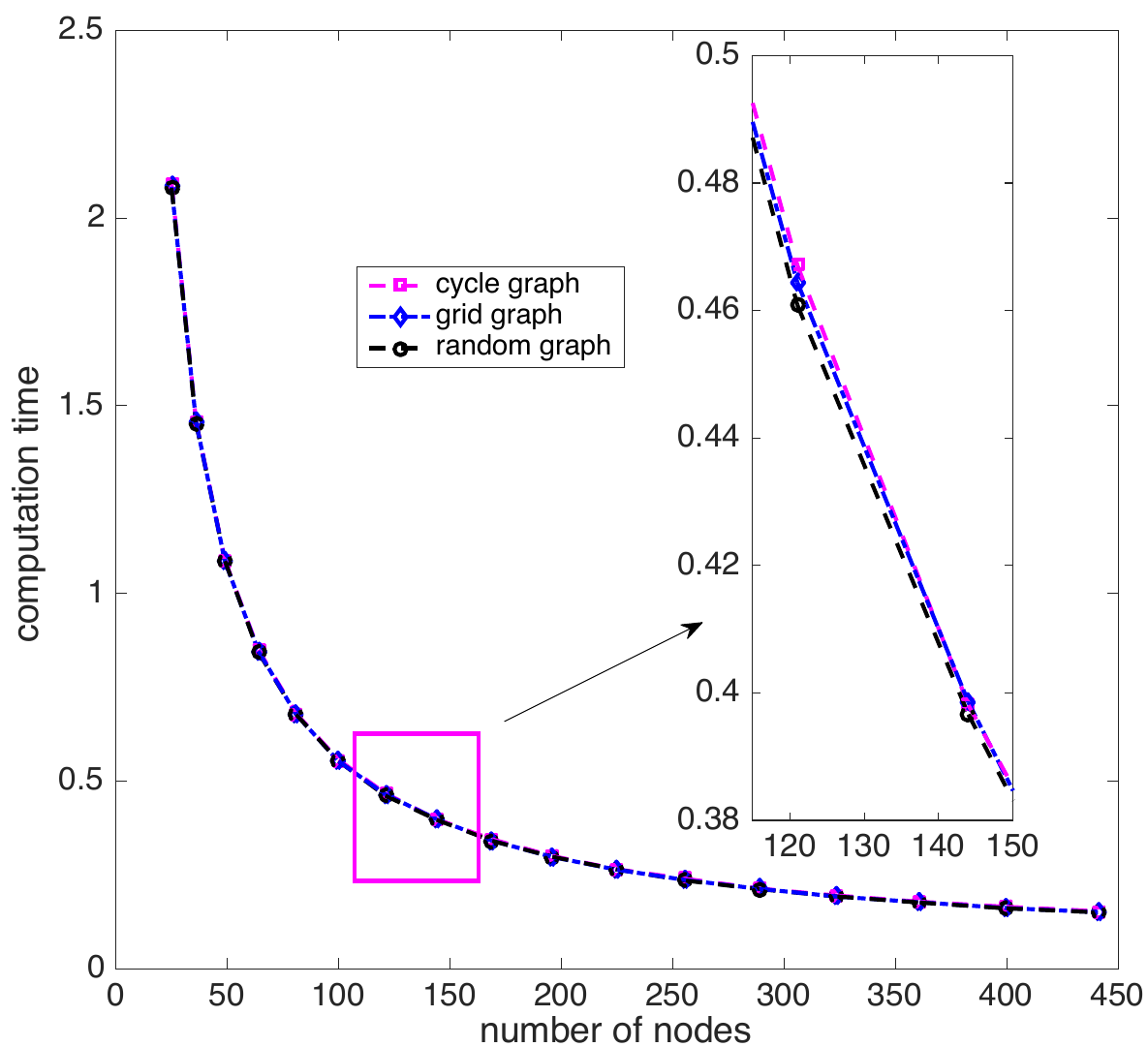}
		\caption{}
		\label{fig_timevsnode}	
	\end{subfigure}%
	\begin{subfigure}[t]{0.5\linewidth}
		\centering
		\includegraphics[width=\textwidth,height=3.75cm]{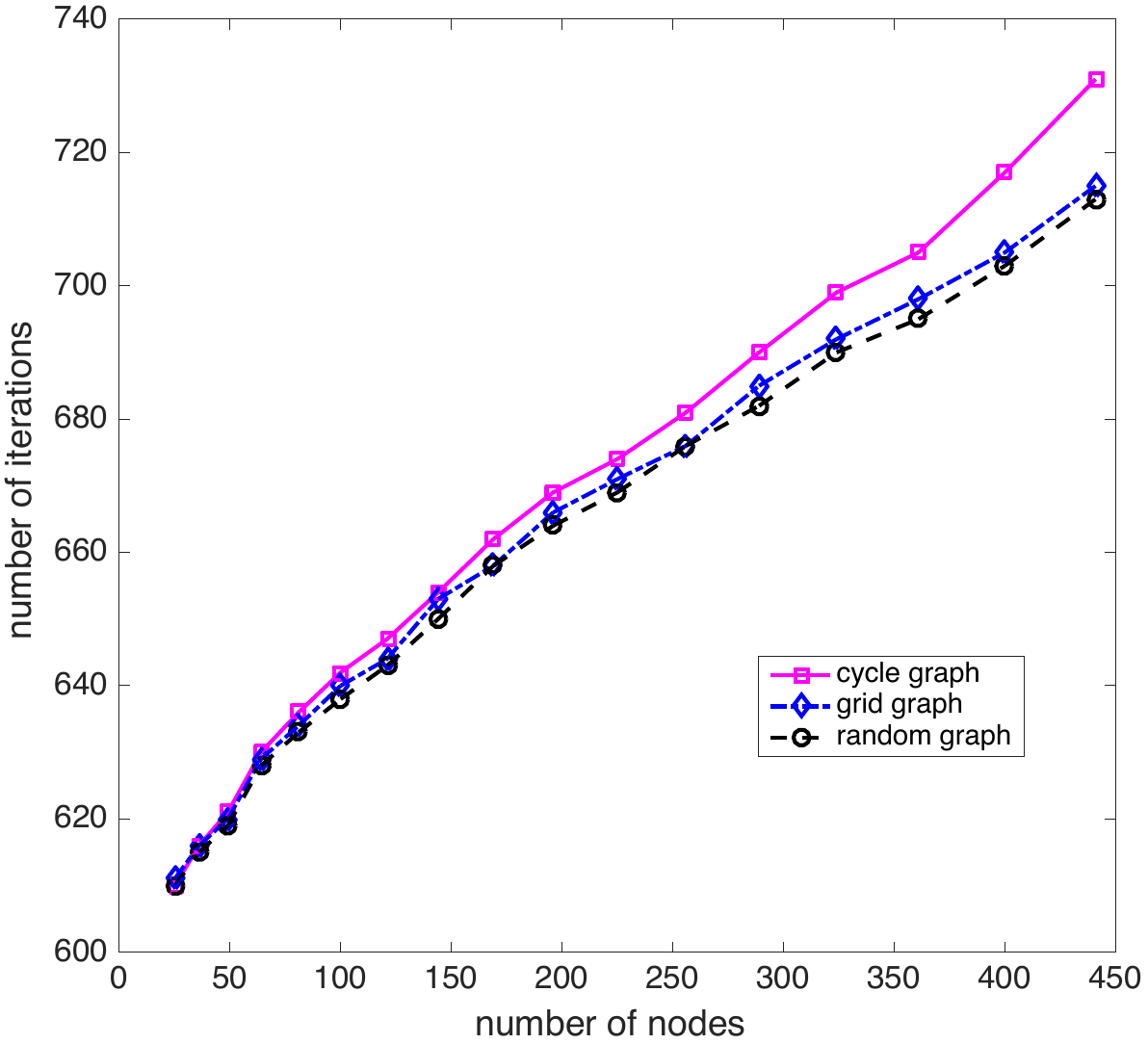}
		\caption{}
		\label{fig_density}
	\end{subfigure}
	\caption{Performance of Algorithm \ref{alg_dist} over  three types of network topologies. (a) The time (second) attributed just to computation versus the network size. (b) The number of iterations versus the network size.}
	\label{fig2:exp2}
\end{figure}

\section{Conclusion}
\label{sec_conclusion}

In this work, we developed distributed algorithms to collaboratively solve RCO via the SP, which possibly has a large number of constraints. Two distributed algorithms with very simple structure were provided for undirected and directed graphs, respectively. Compared with the existing results, the complexity per iteration of the proposed algorithms is significantly reduced.  Future work will focus on exploiting the structure of the parametrized constraint functions to reduce the computation cost.

\section*{Acknowledgement}
The authors would like to thank the Associate Editor and anonymous reviewers for their very constructive comments, which greatly improve the quality of this work.

\bibliographystyle{IEEEtrans}
\bibliography{mybibf}
\begin{IEEEbiography}
[{\includegraphics[width=1in,height=1.25in,clip,keepaspectratio]{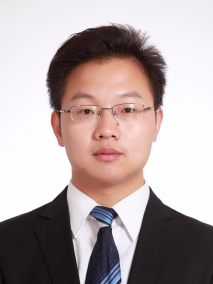}}]
{Keyou You}   received the B.S. degree in Statistical Science from Sun Yat-sen University, Guangzhou, China, in 2007 and the Ph.D. degree in Electrical and Electronic Engineering from Nanyang Technological University (NTU), Singapore, in 2012. After briefly working as a Research Fellow at NTU, he joined Tsinghua University in Beijing, China where he is now an Associate Professor in the Department of Automation. He held visiting positions at Politecnico di Torino, The Hong Kong University of Science and Technology, The University of Melbourne and etc. His current research interests include networked control systems, distributed  algorithms, and their applications.

Dr. You received the Guan Zhaozhi award at the 29th Chinese Control Conference in 2010, and a CSC-IBM China Faculty Award in 2014. He was selected to the National 1000-Youth Talent Program of China in 2014, and received the National Science Fund for Excellent Young Scholars in 2017.\end{IEEEbiography}

\begin{IEEEbiography}
[{\includegraphics[width=1in,height=1.25in,clip,keepaspectratio]{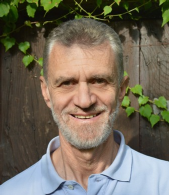}}]
{Roberto Tempo}   was a Director of Research of Systems and Computer Engineering at CNR-IEIIT, Politecnico di Torino, Italy. He held visiting positions at Tsinghua University  in Beijing, Chinese Academy of Sciences, Kyoto University, The University of Tokyo, University of Illinois at Urbana-Champaign, German Aerospace Research Organization in Oberpfaffenhofen and Columbia University in New York. His research activities were focused on the analysis and design of complex systems with uncertainty, and various applications within information technology.  

Dr. Tempo was a Fellow of the IEEE and a Fellow of the IFAC, a recipient of the IEEE Control Systems Magazine Outstanding Paper Award, of the Automatica Outstanding Paper Prize Award, and of the Distinguished Member Award from the IEEE Control Systems Society. He was a Corresponding Member of the Academy of Sciences, Institute of Bologna, Italy, Class Engineering Sciences. In 2010 Dr. Tempo was President of the IEEE Control Systems Society. He served as Editor-in-Chief of Automatica, Editor for Technical Notes and Correspondence of the IEEE Transactions on Automatic Control and Senior Editor of the same journal.\end{IEEEbiography}

\begin{IEEEbiography}
[{\includegraphics[width=1in,height=1.25in,clip,keepaspectratio]{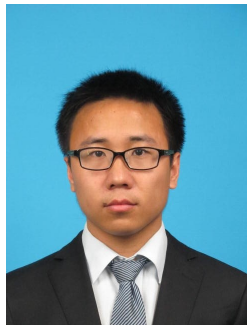}}]
{Pei Xie} received the B.E. degree from the Department of Automation, Tsinghua University, Beijing, China, in 2013. Currently, he is working toward his Ph.D. degree of the same institute. His research interests include networked control system, distributed optimizations, and their applications.
 \end{IEEEbiography}

\end{document}